\newtheorem{theorem}{Theorem}{}
{}
\newtheorem{proposition}{Proposition}{}
{}
{}
\newtheorem{lemma}{Lemma}{}
{}
{}
{}
{}
{}
{}
\newcommand{\channel}{p_{Y|X}(y|x)}
\newcommand{\expt}{\mathbb{E}}
\newcommand{\code}{\mathcal{C}}
\newcommand{\graph}{\mathcal{G}}
\newcommand{\prob}{\mathbb{P}}
\newcommand{\openone}{\leavevmode\hbox{\small1\normalsize\kern-.33em1}}
\title{Sharp Bounds for Optimal Decoding of Low Density Parity Check Codes}
\begin{document}

%

\author{Shrinivas Kudekar and  Nicolas Macris \\
\\
Ecole Polytechnique F\'ed\'erale de Lausanne \\
School of Computer and Communication Sciences \\
LTHC, I\&C, Station 14,  CH-1015 Lausanne\\ shrinivas.kudekar@epfl.ch, nicolas.macris@epfl.ch}


\maketitle

\begin{abstract} Consider communication over a binary-input memoryless
output-symmetric channel with low density parity check (LDPC) codes and maximum
a posteriori (MAP) decoding. The replica method of spin glass theory allows to
conjecture an analytic formula for the average input-output conditional entropy
per bit in the infinite block length limit. Montanari proved a lower bound for
this entropy, in the case of LDPC ensembles with convex check degree
polynomial, which matches the replica formula. Here we extend this lower bound
to any irregular LDPC ensemble. The new feature of our work is an analysis of
the second derivative of the conditional input-output entropy with respect to
noise. A close relation arises between this second derivative  and correlation
or mutual information of codebits. This allows us to extend the realm of the
``interpolation method'', in particular we show how channel symmetry allows to
control the fluctuations of the ``overlap parameters''.  
\end{abstract}

\section{Introduction and Main Results}\label{section-1} 

Linear codes based on sparse random graphs have emerged as a major chapter of coding theory \cite{Ruediger-Ridchardson-book}. While the belief propagation (BP) decoding algorithm and density evolution method have been explored in detail because of their low algorithmic complexity and good performance, much remains to be understood about the optimal (MAP) performance bounds of sparse graph codes. Recent theoretical progress on the binary erasure channel (BEC) has convincingly shown that BP and MAP decoding have intimate relationships (see \cite{Ruediger-Ridchardson-book} and in particular \cite{Maxwell}), but understanding this relationship for other channels is still a largely open problem. In fact, the replica and/or cavity methods of statistical mechanics of dilute spin glass models allow to conjecture an analytic formula for 
$H_n(\underline{X}|\underline{Y})$, the entropy of the
transmitted message $\underline{X}=(X_1,...,X_n)$ conditional to the received message $\underline{Y}=(Y_1,...,Y_n)$ in the large block length limit $n\to +\infty$. The replica formula expresses the conditional entropy as the solution of a  variational problem whose critical points are given by the density evolution fixed point equation
(see \cite{Saad}, \cite{MonGal}). If one is to solve the fixed point equation iteratively, the choice of initial conditions is not necessarily the one given by channel outputs (as in standard density evolution) but the one which yields the maximum conditional entropy. Note that a byproduct of the replica formula is the determination of 
 the maximum a posteriori (MAP) noise threshold, above which reliable
communication is not possible whatever the decoding algorithm. 

The proof of the replica formulas is, in general, an open problem\footnote{In a few spin glass models the replica formulas have been fully demonstrated. Remarkably Talagrand \cite{Talagrand-paper} has proven the Parisi formula with full symmetry breaking \cite{Parisi-paper} for the Sherrington-Kirkpatrick (SK) model. In \cite{Korada-Macris-seattle} it is shown that the replica symmetric formula holds for a complete $p$-spin model with gauge symmetry.}. In the context of communication  they have been proven for a class of low density parity check codes (LDPC) codes on the BEC \cite{Meason-Montanari-Urbanke-nice}, \cite{Kudekar-Korada-Macris-nice} (see also \cite{Kudekar-Macris-turbo} for recent work going beyond the BEC) and for low density generator codes (LDGM) on a class of channels \cite{Kudekar-Macris-porto}.
 
A promising approach towards a general proof of the replica formulas seems to be the use of the so-called interpolation method first developped in the context of the SK model \cite{Guerra-Toninelli}, \cite{Guerra}, \cite{Talagrand-book}. 
Consider an LDPC($n,\Lambda, P$) ensemble where $\Lambda(x)=\sum_d\Lambda_d x^d$, 
$P(x)=\sum_k P_k x^k$ are the variable and check degree distributions from the node perspective. We will always assume that the maximal degrees are finite.
Montanari \cite{Montanari} (see also the related work of Franz-Leone \cite{Franz-Leone} and Talagrand- Pachenko \cite{Talagrand-Pachenko}) has developped the interpolation method for such a system and has derived a lower bound for the conditional entropy  for ensembles with any 
polynomial $\Lambda(x)$ but $P(x)$  restricted to be convex for $-e\leq x\leq e$ 
(in particular if the check degree is constant this means it has to be even). An important fact is that these lower bounds match the replica solution, and are thus believed to be tight. 
Since Fano's inequality tells us that the block error probability
for a code having length $n$ and rate $r$ is lower bounded by
$\frac{1}{rn}H_n(\underline{X}|\underline{Y})$,
 an immediate application of the lower bound is the numerical computation of a rigorous upper bound on the MAP threshold. 

In the present paper we drop the convexity requirement for $P(x)$ in the cases of the BEC, BIAWGNC with any noise level an in the case of general binary memoryless (BMS) channels in a high noise regime. In other words we prove the lower bound for any standard regular (so odd degrees are allowed) or irregular code ensemble.

Besides the main result itself, we introduce a new tool in the form of a relationship between the second derivative of the conditional entropy with respect to the noise and correlations functions of codebits. These correlation functions are shown to be intimately related to the mutual information between two codebits. The formulas are somewhat similar to those for GEXIT functions \cite{Ruediger-Ridchardson-book} which relate the first derivative of conditional entropy to soft bit estimates. By combining these relations with the interpolation method we are able to control the fluctuations of the so-called overlap parameters. This part of our analysis is crucial for proving the general lower bound on the conditional entropy and relies heavily on channel symmetry. 

A preliminary summary of the present work has appeared in \cite{Kudekar-Macris-seattle}.

\subsection{Variational  bound on the conditional entropy}
Let $\channel$ be the transition probability of a BMS$(\epsilon)$ channel where $\epsilon$ is the noise parameter (understood to vary in the appropriate range). We will work in terms of both the likelihood  
\begin{equation}\nonumber
l=\ln\biggl[\frac{p_{Y\vert X}(y\vert 0)}{p_{Y\vert X}(y\vert 1)}\biggr]
\end{equation} 
and difference 
\begin{equation}\nonumber
t= p_{Y\vert X}(y\vert 0)-p_{Y\vert X}(y\vert 1)=\tanh\frac{l}{2}
\end{equation} 
variables. It will be convenient to use the notation $c_L(l)$  and $c_D(t)$ for the distributions of $l$  and $t$, assuming that the all zero codeword is transmitted (that is to say that $c_L(l) dl = c_D(t) dt=  p_{Y\vert X}(y\vert 0)dy$).

Let $V$ be some random variable with an arbitrary density $d_V(v)$ satisfying the symmetry condition
$d_V(v)=e^v d_V(-v)$. 
Also let 
\begin{equation}\label{U-variable}
U = \tanh^{-1}\biggl[\prod_{i=1}^{k-1}\tanh V_i\biggr]
\end{equation}
where $V_i$ are i.i.d copies of $V$ and $k$ is the (random) degree of a check node. We denote by $U_c$, $c=1,...,d$ i.i.d copies of $U$ where $d$ is the (random) degree of variable nodes.
 Notice that in the belief propagation (BP) decoding algorithm $U$ appears as the
check to variable node message and $V$ appears as the variable to check node message.
Define the functional\footnote{The subscript $RS$ stands for ``replica symmetric'' because this functional has been obtained from the replica symmetric ansatz for an appropriate spin glass, see for example \cite{MonGal}, \cite{Saad}} (we view it as a functional of the probability distribution $d_V$)
\begin{align}\nonumber
h_{RS}[d_V;\Lambda, P]=&
\expt_{l,d,U_c}\biggl[\ln\biggl(
e^{\frac{l}{2}}\prod_{c=1}^{d}(1+\tanh U_c)
+e^{-\frac{l}{2}}\prod_{c=1}^{d}(1-\tanh U_c)\biggr)\biggr]
\\ \nonumber & +\frac{\Lambda'(1)}{P'(1)}\expt_{k,V_i}\biggl[\ln(1+\prod_{i=1}^{k}\tanh V_i)\biggr] 
\\ \nonumber &
-\Lambda'(1)\expt_{V,U}\biggl[\ln(1+\tanh V \tanh U)\biggr] 
-\frac{\Lambda'(1)}{P'(1)}\ln 2
\nonumber
\end{align}
Our main result is about the conditional entropy per bit, averaged over the code ensemble $\code={\rm LDPC}(n,\Lambda, P)$.
$$
\mathbb{E}_{\code}[h_n]=\frac{1}{n} \mathbb{E}_{\code}[H_n(\underline{X}|\underline{Y})]
$$ 

\vskip 0.5cm

\noindent{\bf Definition H.} {\it We define the parameters ($p$ an integer)
\begin{equation}\label{parameters}
m_0^{(2p)}=\mathbb{E}[t^{2p}],\qquad m_1^{(2p)}=\frac{d}{d\epsilon} \mathbb{E}[t^{2p}],\qquad m_2^{(2p)}=\frac{d^2}{d\epsilon^2} \mathbb{E}[t^{2p}]
\end{equation}
and say that a ${\rm BMS}(\epsilon)$ channel is in the high noise regime if the following series expansions 
\begin{equation}
 \sum_{p}  (p+1) m_0^{(2p)}
 \qquad
 \sum_{p} \bigl(\frac{5}{2}\bigr)^{2p} \vert m_1^{(2p)}\vert
 \qquad
 \sum_{p} \frac{\vert m_2^{(2p)}\vert}{2p(2p-1)}\qquad
\end{equation}
are convergent and if 
$$
(\sqrt{2} -1)\bigl(\frac{5}{2}\bigr)^2\vert m_1^{(2)}\vert > \sum_{p\geq 2} \bigl(\frac{5}{2}\bigr)^{2p} \vert m_1^{(2p)} \vert
$$
}

\vskip 0.5cm

For example the ${\rm BSC}(\epsilon)$ certainly satisfies $H$ if the crossover noise parameter is  close enough to $\frac{1}{2}$, because 
$\mathbb{E}[t^{2p}]= (1-2\epsilon)^{2p}$. More generaly {\it any channel with bounded likehood variables satisfies 
$H$ for a regime of sufficiently high noise}. For channels with unbounded likehoods the condition will be satisfied if $c_L(l)$ has sufficiently good decay properties. But note that the ${\rm BEC}(\epsilon)$ which has mass at $l=+\infty$ does not satisfy this condition since $\mathbb{E}[t^{2p}]= 1-\epsilon$. However as we will see {\it for the BEC($\epsilon$) and the ${\rm BIAWGNC}(\epsilon)$  we do not need condition $H$}. For these two channels our analysis can be made fully non-perturbative, and holds for all noise levels.

\vskip 0.5cm

\begin{theorem}[\bf{Variational Bound}]\label{thm:main}
Assume communication using a standard irregular ${\cal C}={\rm LDPC}(n,\Lambda,P)$ code ensemble, 
through a BEC($\epsilon$) or BIAWGNC($\epsilon$) with any noise level or a BMS($\epsilon$) channel satisfying $H$.
For all $\epsilon$ in the above ranges we have, 
\begin{equation}\nonumber
\liminf_{n\to +\infty}\expt_{\cal C}[h_n]\geq \sup_{d_V} h_{RS}[d_V;\Lambda, P]
\end{equation}
\end{theorem}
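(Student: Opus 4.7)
The strategy is to extend the Guerra--Toninelli interpolation method, as implemented for LDPC ensembles by Montanari, to arbitrary (not necessarily convex) check polynomials $P(x)$. First one introduces a one-parameter family of modified ensembles indexed by $s\in[0,1]$: each check node of the random Tanner graph is retained independently with probability $s$, and each variable node is simultaneously equipped with a Poisson number (of mean proportional to $(1-s)\Lambda'(1)$) of dangling half-edges carrying independent fields drawn from a trial density $d_V$ satisfying the assumed symmetry condition. Writing $h_n(s)$ for the corresponding averaged conditional entropy per bit, the construction is calibrated so that $h_n(1)=\mathbb{E}_{\mathcal{C}}[h_n]$, while $h_n(0)$ converges, as $n\to\infty$, to $h_{RS}[d_V;\Lambda,P]$, which appears precisely as the boundary contribution of the interpolation.

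Second, one computes $\frac{d h_n(s)}{ds}$ in closed form by bookkeeping the combinatorial change induced by an infinitesimal increment of $s$. The derivative splits into a check-side term from thinning a check, a variable-side term from attaching an extra field, and an edge-counting constant. Expanding each logarithm via $\ln(1+x)=-\sum_{p\ge 1}(-x)^p/p$ and using Nishimori (gauge) symmetry of the posterior to kill the odd-in-$p$ contributions, one rewrites the derivative as a series in the even moments of two-replica overlaps $\langle \tau_{i,1}\tau_{i,2}\rangle$ and in the scalars $\mathbb{E}[\tanh^{2q}V]$. When $P$ is convex, Jensen's inequality applied to $x\mapsto x^k$ disposes of each term with the right sign (Montanari's step); for general $P$ this sign is lost and one instead needs the empirical check-side overlap $\prod_{i=1}^{k}\langle \tau_{i,1}\tau_{i,2}\rangle$ to concentrate on its factorized counterpart $\bigl(\mathbb{E}[\tanh^{2q}V]\bigr)^k$ in order to absorb the wrong-sign contributions.

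The main obstacle, and the new contribution of the paper, is establishing this concentration. The plan is to exploit the identity advertised in the introduction relating $\frac{d^2}{d\epsilon^2}\mathbb{E}_{\mathcal{C}}[h_n]$ to a double sum of pairwise correlation functions of codebits. Channel symmetry converts pairwise mutual information into a convergent moment expansion in $\mathbb{E}[\langle \tau_i\tau_j\rangle^{2p}]$, and these are exactly the quantities that bound the variance of the check-side overlap. Integrating the second-derivative identity over a noise window then furnishes an $L^1$-type control of the overlap fluctuations strong enough to dominate the wrong-sign terms. For the BEC and BIAWGNC the noise window can be taken arbitrarily large, so the argument is fully non-perturbative; for a general BMS channel the sums over $p$ close only under the quantitative smallness built into the high-noise hypothesis $H$, whose explicit numerical coefficients ($5/2$, $\sqrt{2}-1$, etc.) are dictated by the need to dominate the higher-moment tails $\sum_{p\ge 2}(5/2)^{2p}|m_1^{(2p)}|$ by the leading $p=1$ moment $|m_1^{(2)}|$.

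With concentration in hand one concludes $\frac{d h_n(s)}{ds}\ge -o(1)$ as $n\to\infty$; integrating in $s$ from $0$ to $1$ and then taking the supremum over $d_V$ yields the stated variational lower bound. The delicate point that I expect to absorb most of the work is not the interpolation computation itself --- which is by now routine --- but the quantitative passage from the scalar information contained in a \emph{single} second derivative with respect to noise to a simultaneous control of \emph{all} even overlap moments appearing in the derivative series; that gap is exactly what the hypothesis $H$ and the special monotonicity features of the BEC/BIAWGNC are designed to close.
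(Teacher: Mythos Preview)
Your broad strategy is the paper's strategy: interpolation \`a la Montanari, remainder expressed through even overlaps $Q_{2p}$, concentration of overlaps controlled via the second-derivative/correlation identity, and an integration over a noise window to sidestep thresholds. Two points, however, need correcting.

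\medskip
\textbf{The concentration target.} You write that for non-convex $P$ one must show the check-side overlap concentrates on its ``factorized counterpart'' $\bigl(\mathbb{E}[\tanh^{2q}V]\bigr)^k$, i.e.\ on $q_{2p}$. This is not what is proved, and in fact it is not true for an arbitrary trial $d_V$: the Gibbs mean $\langle Q_{2p}\rangle_s$ has no reason to equal $q_{2p}$ unless $d_V$ happens to be a density-evolution fixed point, and establishing that equality is essentially the (open) reverse inequality. The actual mechanism is different and simpler. One observes that
\[
\langle Q_{2p}\rangle_s \;=\; \sum_i w_i(t_*)X_i(t_*)\,\langle\sigma_i\rangle_s^{2p}\;\ge\;0,
\]
and that any check polynomial $P(x)=\sum_k P_k x^k$ with $P_k\ge 0$ is automatically convex on $[0,\infty)$. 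Hence the convexity step
\[
P(\langle Q_{2p}\rangle_s)-P(q_{2p})\;\ge\;P'(q_{2p})\bigl(\langle Q_{2p}\rangle_s-q_{2p}\bigr)
\]
is free for \emph{every} $d_V$. The only thing one must prove is that $Q_{2p}$ concentrates around its own Gibbs mean $\langle Q_{2p}\rangle_s$, i.e.\ that $\mathbb{E}_s[\langle(Q_{2p}-\langle Q_{2p}\rangle_s)^2\rangle_s]$ is small. That variance expands as a sum of $\langle\sigma_i\sigma_j\rangle_s^{2p}-\langle\sigma_i\rangle_s^{2p}\langle\sigma_j\rangle_s^{2p}$, which is bounded by $2p$ times the spin--spin correlation $\langle\sigma_i\sigma_j\rangle_s-\langle\sigma_i\rangle_s\langle\sigma_j\rangle_s$; the correlation formula for $\frac{d^2}{d\epsilon^2}\mathbb{E}_s[h_n]$ then closes the loop (exactly for BEC/BIAWGNC, perturbatively under $H$). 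So the second-derivative identity is used to control fluctuations around $\langle Q_{2p}\rangle_s$, not around $q_{2p}$; without the observation $\langle Q_{2p}\rangle_s\ge 0$ your plan has no way to recover a sign.

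\medskip
\textbf{The interpolation construction.} Retaining each check independently with probability $s$ (Bernoulli thinning) works cleanly only for the Poisson ensemble. For a standard irregular ensemble with prescribed $\Lambda$ the paper follows Montanari and passes through a multi-Poisson ensemble with a discrete ``round'' index $t_*$ and a small parameter $\gamma$; the interpolation runs over $(t_*,s)$ with $s\in[0,\gamma]$, and one finally lets $\gamma\to 0$ using $\mathbb{E}_{\mathcal{C}}[h_n]=\mathbb{E}_{\mathcal{MP}}[h_{n,\gamma}]+O(\gamma^b)+o_n(1)$. This is bookkeeping rather than a new idea, but your one-parameter thinning does not directly give $h_n(0)\to h_{RS}[d_V;\Lambda,P]$ for a fixed non-Poisson $\Lambda$.
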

Let us note that this theorem already appears in \cite{MacrisIT07} for the special case of the BIAWGNC for a Poissonnian $\Lambda(x)$.
We stress again that a formal calculation using the replica method yields
\begin{equation}\nonumber
\lim_{n\to+\infty}\expt_{\cal
C}[h_n]=\sup_{d_V} h_{RS}[d_V;\Lambda,P]
\end{equation}
For this reason it is strongly suspected that the converse inequality holds as well, but so far no progress has been made except in a limited number of situations alluded to before.

\subsection{Derivatives of the conditional entropy}
 
Our proof of the variational bound uses integral formulas for the first and second derivatives of
$\expt_{\cal C}[h_n]$ with respect to the noise parameter. The ensemble formulas follow from slightly more general ones that are valid for any fixed linear code. To give the formulation for a fixed linear code it is convenient to introduce a noise vector
$\underline{\epsilon}=(\epsilon_1,...,\epsilon_n)$ and  a ${\rm BMS}(\underline{\epsilon})$ channel with noise level $\epsilon_i$ when bit $x_i$ is sent. When all noise levels are set to the same value $\epsilon$ the channel is denoted ${\rm BMS}(\epsilon)$.
The distributions of the likelihood $l_i$ or difference domain $t_i$ representations of the channel outputs now depend on $\epsilon_i$. In order to keep the notation simpler we do not explicitely indicate the $\epsilon_i$ dependence and still denote them as $c_L(l_i)$ and $c_D(t_i)$ respectively.

We introduce the soft MAP estimates of bit $X_i$ 
\begin{align*}
L_i= \ln \biggl[\frac{p_{X_i\vert \underline{Y} }(0\vert \underline{y})}{p_{X_i\vert
\underline{Y}}(1\vert \underline{y})}\biggr], \quad \quad
 T_i  =  p_{X_i\vert \underline{Y} }(0\vert \underline{y}) - p_{X_i\vert
\underline{Y}}(1\vert \underline{y})=\tanh \frac{L_i}{2} \nonumber
\end{align*}
and the soft estimate for the modulo $2$ sum $X_i\oplus X_j$,
\begin{align*}
L_{ij}= \ln \biggl[\frac{p_{X_i\oplus X_j\vert \underline{Y}}(0\vert \underline{y})}{p_{X_i\oplus X_j\vert \underline{Y}}(1\vert \underline{y})}\biggr], \quad \quad &T_{ij}= p_{X_i\oplus X_j\vert \underline{Y}}(0\vert \underline{y}) - p_{X_i\oplus X_j\vert \underline{Y}}(1\vert \underline{y}) =\tanh \frac{L_{ij}}{2} \nonumber
\end{align*}
In the sequel the notations $\underline v^{\sim i}$ (resp. $\underline v^{\sim ij}$) means that component $v_i$ (resp. $v_i$ and $v_j$) are omitted from the vector $\underline v$.
The following is known \cite{Ruediger-Ridchardson-book} but we state it for completeness. A derivation in the spirit of the present paper can also be found in \cite{Macris}.

\begin{proposition}[\bf{GEXIT formula}]\label{lem:firstderi}
For any ${\rm BMS}(\underline{\epsilon})$ channel and any fixed linear code we have
\begin{equation}\nonumber  
\frac{\partial}{\partial\epsilon_i} H_n(\underline{X}\mid \underline{Y})=
\int_{-1}^{+1}dt_i\frac{\partial c_D(t_i)}{\partial\epsilon_i} g_1(t_i)
\end{equation} 
where 
$$
g_1(t_i)=-\expt_{\underline{t}^{\sim i}}\biggl[\ln\biggl(\frac{1-t_iT_i}{1 - t_i}\biggr)\biggr]
$$
\end{proposition}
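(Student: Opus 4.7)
The plan is to use channel symmetry and linearity of the code to reduce the computation to the case of all-zero codeword transmission, rewrite the conditional entropy in terms of the difference variables $t_j$, and then isolate the soft estimate $T_i$ via a one-site factorization of the resulting partition function. A single algebraic identity, together with the vanishing of $\int \partial_{\epsilon_i} c_D(t_i)\, dt_i$, then yields the formula.

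First I would establish the gauge identity $H_n(\underline{X}|\underline{Y}) = \expt_{\underline{y}|\underline{0}}[-\ln p(\underline{0}|\underline{y})]$ by means of the substitution $y_j \mapsto (-1)^{x_{0,j}} y_j$, which, using channel symmetry and the fact that $\underline{x} \mapsto \underline{x} \oplus \underline{x}_0$ permutes the code, sends each term $\expt_{\underline{y}|\underline{x}_0}[-\ln p(\underline{x}_0|\underline{y})]$ to its $\underline{x}_0 = \underline{0}$ counterpart. Writing the single-channel law in the standard form that exhibits $t_j$, one obtains
\[ p(\underline{0}|\underline{y}) = \frac{\prod_j (1+t_j)}{Z(\underline{t})}, \qquad Z(\underline{t}) = \sum_{\underline{\sigma} \in \mathcal{C}_\sigma} \prod_j(1+\sigma_j t_j), \]
where $\mathcal{C}_\sigma$ denotes the $\pm 1$ image of the code. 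Under the all-zero convention the $t_j$ are independent with density $c_D(t_j;\epsilon_j)$, hence
\[ H_n(\underline{X}|\underline{Y}) = \expt_{\underline{t}}[\ln Z(\underline{t})] - \sum_j \expt[\ln(1+t_j)]. \]

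Differentiating with respect to $\epsilon_i$, only the $i$-th channel contributes. To expose $T_i$, I would factor
\[ Z(\underline{t}) = (1+t_i) W_+^i + (1-t_i) W_-^i, \qquad W_\pm^i = \sum_{\underline{\sigma}\in\mathcal{C}_\sigma,\, \sigma_i = \pm 1} \prod_{j \neq i}(1+\sigma_j t_j), \]
so that $W_\pm^i$ depend only on $\underline{t}^{\sim i}$. The definition of $T_i$ as a ratio of these partial sums then delivers the key identity
\[ 1 - t_i T_i = \frac{(1-t_i^2)(W_+^i + W_-^i)}{Z(\underline{t})}, \]
from which
\[ \expt_{\underline{t}^{\sim i}}[\ln Z(\underline{t})] - \ln(1+t_i) = \ln(1-t_i) + \expt_{\underline{t}^{\sim i}}[\ln(W_+^i + W_-^i)] - \expt_{\underline{t}^{\sim i}}[\ln(1-t_i T_i)]. \]
The middle term is independent of $t_i$ and therefore integrates to zero against $\partial_{\epsilon_i} c_D(t_i)$ (since $\int c_D(t_i;\epsilon_i)\, dt_i \equiv 1$), leaving precisely $\int dt_i\, \partial_{\epsilon_i} c_D(t_i)\, g_1(t_i)$.

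The bulk of the argument is algebra; the main subtlety lies in the gauge step, where one must verify that the substitution preserves the channel law with trivial Jacobian (for discrete output alphabets one replaces the change of variables by the analogous output relabelling). Once this reduction is in place, the remainder is driven entirely by the factorization $Z(\underline{t}) = (1+t_i)W_+^i + (1-t_i)W_-^i$ and the identity for $1 - t_i T_i$, together with the observation that $t_i$-independent contributions are annihilated by $\int dt_i\, \partial_{\epsilon_i} c_D(t_i)$.
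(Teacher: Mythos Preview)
Your proposal is correct and follows essentially the same route as the paper: reduce to the all-zero codeword, write $H_n$ as $\expt_{\underline t}[\ln Z]-\sum_j\expt[\ln(1+t_j)]$, and isolate a $t_i$-independent factor (your $W_+^i+W_-^i$, the paper's $Z_{\sim i}$) that is killed by $\int dt_i\,\partial_{\epsilon_i}c_D(t_i)$. The only cosmetic difference is that the paper (cf.\ the $S_1$ computation in Section~\ref{section-3}) first lands on the extrinsic form $\ln\bigl((1+t_i\langle\sigma_i\rangle_{\sim i})/(1+t_i)\bigr)$ and then converts to the intrinsic $g_1$ via \eqref{sigmatosigmao}, whereas your identity $1-t_iT_i=(1-t_i^2)(W_+^i+W_-^i)/Z$ reaches the intrinsic form directly.
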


This formula will be used for an ensemble that is symmetric under permutation of bits and a ${\rm BMS}(\epsilon)$ channel. Using 
\begin{equation}\nonumber
 \frac{d}{d\epsilon} H_n(\underline X\mid \underline Y)= \sum _{i=1}^n \frac{\partial}{\partial\epsilon_i} H_n(\underline X\mid \underline Y)\bigg\vert_{\epsilon_i=\epsilon}
\end{equation}
and averaging over the code ensemble ${\cal C}$ we get for the average entropy per bit,
\begin{equation}\nonumber
\frac{d}{d\epsilon} \mathbb{E}_{\code} [h_n]=
\int_{-1}^{+1}dt_1\frac{\partial c_D(t_1)}{\partial\epsilon} \mathbb{E}_{\cal C}[g_1(t_1)]
\end{equation} 
There are two channels where these general formulas take a simpler form.
For the BEC\footnote{In this case the ratio in the logarithm may take the ambiguous value $\frac{0}{0}$ but the formula is to be interpreted as \eqref{eq:becfderi}. We will see in section \ref{section-2} that in terms of extrinsic soft bit estimates there is an analogous expresion that is unambiguous.}
\begin{equation}\label{eq:becfderi}
\frac{\partial}{\partial\ln\epsilon_i}  H_n(\underline{X}\mid \underline{Y})= 
\ln 2(1-\mathbb{E}_{\underline{t}}[T_i])
\end{equation}
and 
\begin{equation}\label{bec-intrin}
\frac{d}{d\ln\epsilon} \mathbb{E}_{\code} [h_n]= 
\ln 2(1-\mathbb{E}_{\code,\underline{t}}[T_1])
\end{equation}
Similarly on the BIAWGNC,
\begin{equation}\label{eq:awgnfderi}
 \frac{\partial}{\partial\epsilon_i^{-2}} H_n(\underline{X}\mid \underline{Y})= -\frac{1}{2}
(1-\mathbb{E}_{\code,\underline{l}}[T_i])
\end{equation}
and
\begin{equation}\label{biawgnc-intrin}
\frac{d}{d\epsilon^{-2}} \mathbb{E}_{\code} [h_n]= -\frac{1}{2}
(1-\mathbb{E}_{\code,\underline{t}}[T_1]) 
\end{equation}
We will prove
\vskip 0.5cm
\begin{proposition}[\bf{Correlation formula}]\label{lem:secderi}For any ${\rm BMS}(\underline{\epsilon})$ channel and any fixed linear code we have
\begin{align}\nonumber
\frac{\partial^2}{\partial\epsilon_i\partial\epsilon_j}H_n(\underline{X}\mid \underline{Y})=&\delta_{ij}\int_{-1}^{+1}
 dt_i\frac{\partial^2c_D(t_1)}{\partial\epsilon_i^2} g_1(t_i)  
\\ \nonumber & +(1-\delta_{ij})\int_{-1}^{+1}\int_{-1}^{+1}
dt_idt_j\frac{\partial c_D(t_i)}{\partial\epsilon_i}\frac{\partial c_D(t_j)}{\partial \epsilon_j}g_2(t_i,t_j)
\end{align}
with 
\begin{align}
g_2(t_i,t_j)=\expt_{\underline{t}^{\sim ij}} \biggl[
\ln \biggl(\frac{1-t_iT_i-t_jT_j+t_it_jT_{ij}}
{1-t_iT_i-t_jT_j+t_it_jT_iT_j}
\biggr)\biggr]
\nonumber 
\end{align}
\end{proposition}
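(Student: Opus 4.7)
My plan is to differentiate the GEXIT formula of Proposition 1 once more in $\epsilon_j$. Since $g_1(t_i)=-\mathbb{E}_{\underline t^{\sim i}}[\ln((1-t_iT_i)/(1-t_i))]$ is an expectation against $\prod_{k\ne i}c_D(t_k)\,dt_k$, it does not depend on $\epsilon_i$; and conversely $c_D(t_i)$ does not depend on $\epsilon_j$ for $j\ne i$. So in the diagonal case $i=j$ the second derivative hits $c_D(t_i)$ twice, producing the first term of the formula. In the off-diagonal case $i\ne j$ it hits the factor $c_D(t_j)$ inside the expectation, yielding
\[
\frac{\partial^2 H_n}{\partial\epsilon_i\partial\epsilon_j}=-\int\!\int dt_i\,dt_j\,\frac{\partial c_D(t_i)}{\partial\epsilon_i}\frac{\partial c_D(t_j)}{\partial\epsilon_j}\,\mathbb{E}_{\underline t^{\sim ij}}\biggl[\ln\frac{1-t_iT_i}{1-t_i}\biggr].
\]
The remaining task is to identify this integrand with $g_2(t_i,t_j)$.

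For this I would, at fixed $\underline y^{\sim ij}$, work with the extrinsic joint posterior $\tilde p(\sigma_i,\sigma_j\mid\underline y^{\sim ij})$ of the spin variables $\sigma_k=(-1)^{X_k}$, together with its moments $\tilde T_i^{(j)},\tilde T_j^{(i)},\tilde T_{ij}$. Using that $p(y_k\mid\sigma_k)\propto 1+t_k\sigma_k$, the full two-spin posterior factorises as $p(\sigma_i,\sigma_j\mid\underline y)\propto\tilde p(\sigma_i,\sigma_j)(1+t_i\sigma_i)(1+t_j\sigma_j)$, and direct expansion gives the key algebraic identities
\[
1-t_iT_i=\frac{(1-t_i^2)(1+t_j\tilde T_j^{(i)})}{D},\qquad 1-t_iT_i-t_jT_j+t_it_jT_{ij}=\frac{(1-t_i^2)(1-t_j^2)}{D},
\]
where $D=1+t_i\tilde T_i^{(j)}+t_j\tilde T_j^{(i)}+t_it_j\tilde T_{ij}$, together with the symmetric formula for $1-t_jT_j$. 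Combined with the trivial identity $1-t_iT_i-t_jT_j+t_it_jT_iT_j=(1-t_iT_i)(1-t_jT_j)$, one then checks that after taking $\mathbb{E}_{\underline t^{\sim ij}}$ both $-\ln\frac{1-t_iT_i}{1-t_i}$ and the logarithm defining $g_2$ equal $\mathbb{E}[\ln D]$ plus ``separable'' remainders that depend only on $t_i$ or only on $t_j$.

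Such separable remainders vanish from the double integral because $\int\partial c_D(t)/\partial\epsilon\,dt=\partial_\epsilon\!\int c_D(t)\,dt=0$: any $t_i$-only piece integrates to zero against $\partial c_D(t_j)/\partial\epsilon_j$, and symmetrically. Both expressions therefore collapse to the common value $\int\!\int dt_i\,dt_j\,\partial c_D(t_i)/\partial\epsilon_i\,\partial c_D(t_j)/\partial\epsilon_j\,\mathbb{E}_{\underline t^{\sim ij}}[\ln D]$, which yields the stated formula. I expect the main obstacle to be the algebraic bookkeeping: carefully distinguishing the intrinsic variables $t_i,t_j$ from the extrinsic moments $\tilde T_\bullet$ (which are measurable only with respect to $\underline t^{\sim ij}$), and systematically discarding separable remainders at each step. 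Once this is organised, the identities reduce to routine expansion.
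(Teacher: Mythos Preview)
Your proposal is correct and follows essentially the same route as the paper. The paper differentiates the relation $H_n=\mathbb{E}_{\underline l}[\ln Z]-\sum_k\int c_L(l_k)\frac{l_k}{2}\,dl_k$ twice directly, writes $\ln Z-\tfrac{l_i}{2}-\tfrac{l_j}{2}=\ln Z_{\sim ij}+\ln\bigl(D/((1+t_i)(1+t_j))\bigr)$ in terms of the same extrinsic denominator $D$, discards the separable pieces using $\int\partial_\epsilon c=0$ exactly as you do, and then converts from extrinsic to intrinsic via the identities of Appendix~A (your $1-t_iT_i=(1-t_i^2)(1+t_j\tilde T_j^{(i)})/D$, etc., are precisely those relations inverted). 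The only organisational difference is that you start from the already-proved Proposition~1 and differentiate once more, whereas the paper re-derives both derivatives from $\ln Z$; the algebraic content and the separability trick are identical.
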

Again, for the case of interest later on, we have a ${\rm BMS}(\epsilon)$ channel and a linear code ensemble that is symmetric under permutations of bits, thus
\begin{align}\label{second-derivative-ensemble}
\frac{d^2}{d\epsilon^2}\mathbb{E}_{\cal C} [h_n]=&\int_{-1}^{+1}
dt_1\frac{\partial^2c_D(t_1)}{\partial\epsilon^2} \mathbb{E}_{\cal C}[g_1(t_1)] 
\\ \nonumber & 
+\sum_{i\neq 1}\int_{-1}^{+1}\int_{-1}^{+1}
dt_1dt_i\frac{\partial c_D(t_1)}{\partial\epsilon}\frac{\partial  c_D(t_i)}{\partial\epsilon}\mathbb{E}_{\cal C}[g_2(t_1,t_i)]
\end{align}
For the BEC\footnote{The same remark than before applies here.} these formulas simplify
\begin{align}\nonumber
\frac{\partial^2}{\partial\ln \epsilon_i\partial\ln\epsilon_j}H_n(\underline{X}\mid \underline{Y})=(1-\delta_{ij})\ln 2\expt_{\underline{t}}\big[T_{ij}-T_iT_j]
\end{align}
and 
\begin{align}\label{bec-second-derivative}
\frac{d^2}{(d\ln \epsilon)^2}\mathbb{E}_{\cal C} [h_n&]=\ln 2\sum_{i\neq 1}^{n}\expt_{\code,\underline{t}}\big[T_{1i}-T_1T_i]
\end{align}
For the BIAWGNC
\begin{align}
\frac{\partial^2}{\partial\epsilon_i^{-2}\partial\epsilon_j^{-2}}H_n(\underline{X}\mid \underline{Y})=\frac{1}{2}\expt_{\underline{t}}[\big(T_{ij}-T_iT_j\big)^2], \end{align}
and
\begin{align}\label{biawgnc-second-derivative}
\frac{d^2}{(d\epsilon^{-2})^2}\mathbb{E}_{\cal C} [h_n]=\frac{1}{2}\sum_{i=1}^{n}&\expt_{\code,\underline{t}}[\big(T_{1i}-T_1T_i\big)^2]
\end{align}
Formulas \eqref{bec-second-derivative} and \eqref{biawgnc-second-derivative} involve the ``correlation'' $(T_{ij}-T_iT_j)$ for bits $X_i$ and $X_j$.
The general formula \eqref{second-derivative-ensemble} can also be recast in terms of powers of such correlations by expanding the logarithm (see section \ref{section-3}). Loosely speaking, in the infinite block length limit $n\to +\infty$, the second derivative will be well defined only if the correlations have sufficient decay with respect to the graph distance (the minimal length among all paths joining $i$ and $j$ on the Tanner graph). Thus we expect good decay properties for all noise levels except at the phase transition  thresholds where, in the limit $n\to +\infty$, the first derivative generally has bounded discontinuities, and thus the second derivative cannot be uniformly bounded in $n$.

\subsection{Relation to mutual information}\label{subsection-mutual}

  The correlation $T_{ij}-T_iT_j$ is basicaly a measure of the
independence of two codebits, thus it is natural to expect that it should be related to the mutual
information $I(X_i;X_j\mid \underline{Y})$. We do not pursue this issue in all details because it is not used in the rest of the paper, but wish to briefly state the main relations which follow naturaly form the previous formulas.

\vskip 0.5cm

\noindent{\bf The BEC$(\underline{\epsilon})$.} Take $i\neq j$. The chain rule implies $H_n(\underline{X}\mid \underline{Y})= H(X_i X_j\mid \underline{Y})+ H(\underline{X}^{\sim ij}\mid X_iX_j \underline{Y})$. Also $H(\underline{X}^{\sim ij}\mid X_iX_j \underline{Y})= H(\underline{X}^{\sim ij}\mid X_iX_j \underline{Y}^{\sim ij})$. Since $H(\underline{X}^{\sim ij}\mid X_iX_j \underline{Y}^{\sim ij})$ does not depend on $\epsilon_i, \epsilon_j$ we have 
$$
\frac{\partial^2}{\partial\epsilon_i\partial\epsilon_j}
H_n(\underline{X}\mid \underline{Y})
= 
\frac{\partial^2}{\partial\epsilon_i\partial\epsilon_j}
H(X_iX_j\mid \underline{Y})
$$
The conditional entropy on the r.h.s is explicitly
$\epsilon_i\epsilon_j H(X_iX_j|\underline{Y}^{\sim ij})+\epsilon_i(1-\epsilon_j) H(X_i|X_j\underline{Y}^{\sim ij}) +(1-\epsilon_i)\epsilon_j H(X_j|X_i\underline{Y}^{\sim ij})$. In this expression the three conditional entropies are independent of the channel parameters $\epsilon_i$ and $\epsilon_j$. Thus
\begin{align}\nonumber
\frac{\partial^2}{\partial\epsilon_i\partial\epsilon_j}
H_n(\underline{X}\mid \underline{Y})
&=
H(X_iX_j\mid\underline{Y}^{\sim ij}) - H(X_i\mid X_j\underline{Y}^{\sim ij})
- H(X_j\mid X_i\underline{Y}^{\sim ij})
\\ \nonumber &
= H(X_j\mid\underline{Y}^{\sim ij}) - H(X_j\mid X_i\underline{Y}^{\sim ij})
\\ \nonumber &
= I(X_i;X_j\mid \underline{Y}^{\sim ij})
=\frac{1}{\epsilon_i\epsilon_j} I(X_i;X_j\mid\underline{Y})
\nonumber
\end{align}
Summarizing, we have obtained for $i\neq j$,
\begin{align}\nonumber
\frac{\partial^2}{\partial\ln\epsilon_i\partial\ln\epsilon_j}
H_n(\underline{X}\mid \underline{Y})
=I(X_i;X_j\mid\underline{Y})
= \mathbb{E}_{\underline t}[T_{ij}-T_iT_j]
\end{align}

\vskip 0.5cm

\noindent{\bf The BIAWGNC$(\underline{\epsilon})$.} Take $i\neq j$. We note that 
$$
T_{ij}= p_{X_iX_j\mid\underline Y}(00\mid\underline y) + p_{X_iX_j\mid\underline Y}(11\mid\underline y) - p_{X_iX_j\mid\underline Y}(01\mid\underline y) - p_{X_iX_j\mid\underline Y}(10\mid\underline y)
$$
from which it follows
$$
(T_{ij}-T_iT_j)^2
\leq 4\sum_{x_i,x_j}
\biggl\vert p_{X_iX_j\mid\underline Y}(x_ix_j\mid\underline y) - p_{X_i\mid \underline Y}(x_i\mid \underline y)p_{X_j\mid \underline Y}(x_j\mid \underline y)\biggr\vert ^2
$$
Applying the inequality 
$$
\frac{1}{2}\sum_x\vert P(x) - Q(x)\vert^2\leq  D(P\Vert Q)
$$
for the Kullback-Leibler divergence of the two distributions $P=p_{X_iX_j\mid\underline y}$ and $Q=p_{X_i\mid \underline Y}p_{X_j\mid \underline y}$,
we get for $i\neq j$
$$
(T_{ij}-T_iT_j)^2\leq 8 I(X_i;X_j\mid \underline y)
$$
Averaging over the outputs we get
$$
\frac{\partial^2}{\partial\epsilon_i^{-2}\partial\epsilon_j^{-2}}H_n(\underline X\mid \underline Y)
= \mathbb{E}_{\underline{t}}[(T_{ij}-T_iT_j)^2]
\leq 8 I(X_i;X_j\mid \underline{Y})
$$

\vskip 0.5cm

\noindent{\bf Highly noisy BMS channels.} From the high noise expansion
(see section \ref{section-3} and the above remarks, we can derive an inequality like the preceding one, which holds in the high noise regime for general BMS channels. The number $8$ gets replaced by some suitable factor which depends on the channel noise.

\subsection{Organisation of the paper}

The statistical mechanics formulation is very convenient to perform many of the necessary calculations, but also the interpolation method is best formulated in that framework. Thus we briefly recall it in section \ref{section-2} as well as a few connections to the information theoretic language. Section \ref{section-3} contains the derivation of the correlation formula (proposition \ref{lem:secderi}) and other useful material. The interpolation method that is used to prove the variational bound (theorem \ref{thm:main}) is presented in section \ref{section-4}. The main new ingredient of the proof is an estimate (see proposition \ref{conj:overlaps} in section \ref{section-4}) on the fluctuations of overlap parameters. The proof of proposition \ref{conj:overlaps} is the object of section \ref{section-5}. The appendices contain technical calculations involved in the proofs.

\section{Statistical Mechanics Formulation}\label{section-2}
Consider a fixed code belonging to the ensemble $\code= LDPC(n,\Lambda,P)$. The posterior distribution $p_{\underline{X}\vert \underline{Y}}(\underline{x}\vert \underline{y})$ used in MAP decoding can be viewed 
as the Gibbs measure of a particular random spin system. For this it is convenient to use the usual mapping of bits onto spins $\sigma_i=(-1)^{x_i}$. Given any set $A\subset\{1,...,n\}$, we use the notation $\sigma_A=\prod_{i\in } \sigma_i$. Thus $\sigma_A=(-1)^{\oplus_{i\in A} x_i}$. It will be clear from the context if the subscript is a set or a single bit. 
For a uniform prior over the code words and a BMS channel, Bayes rule implies $p_{\underline{X}\vert \underline{Y}}(\underline{x}\vert \underline{y})= \mu(\underline{\sigma})$ with
$$
\mu(\underline{\sigma})=\frac{1}{Z}\prod_{c}\frac{1}{2}(1+\sigma_{\partial c})\prod_{i=1}^n e^{\frac{l_i}{2}\sigma_i}
$$
where $\prod_{c}$ is a product over all check nodes of the given code, and $\sigma_{\partial c}=\prod_{i\in \partial c}\sigma_i$ is the product of the spins (mod $2$ sum of the bits) attached to the variable nodes $i$ that are connected to a check $c$. $Z$ is the normalization factor or ``partition function'' and $\ln Z$ is the ``pressure'' associated to the Gibbs measure $\mu(\underline{\sigma})$. It is related to the conditional entropy by 
\begin{align}
H_n(\underline{X}\vert\underline{Y})
= \mathbb{E}_{\underline{l}}[\ln Z]-\sum_{i=1}^{n}\int_{-\infty}^{+\infty} dl_i c_L(l_i) \frac{l_i}{2}\label{eq:condentropy}
\end{align}
Expectations with respect to $\mu(\underline\sigma)$ for a fixed graph and a fixed channel output 
are denoted by the bracket $\langle-\rangle$. More precisely for any 
$A\subset\{1,...,n\}$,
$$
\langle \sigma_A\rangle=\sum_{\sigma^n} \sigma_A \mu(\sigma^n),\qquad \sigma_A=\prod_{i\in A} \sigma_i
$$ 
More details on the above formalism can be found for example in \cite{MacrisIT07}.

The soft estimate of the bit $X_i$ is (in the difference domain)
\begin{align}\label{eq:intrinsic}
T_i=\langle\sigma_i\rangle
\end{align}
We will also need soft estimates
for $X_i\oplus X_j$, $i\neq j$. In the statistical mechanics formalism they are simply
expressed as 
\begin{align}\label{eq:Tij}
T_{ij}=\langle\sigma_i\sigma_j\rangle  
\end{align}
In particular the correlation between bits $X_i$ and $X_j$ becomes $T_{ij}-T_i T_j=\langle\sigma_i\sigma_j\rangle - \langle\sigma_i\rangle \langle\sigma_j\rangle$, which is the usual notion of spin-spin correlation in statistical mechanics. 

In section \ref{section-3} (and appendices \ref{appendix-B}, 
\ref{appendix-C}) the algebraic manipulations 
are best performed in terms of ``extrinsic'' soft bit estimates. We will need many variants, the simplest one being the estimate of $X_i$ when observation $y_i$ is not available
\begin{align*}
T_i^{\sim i}=\tanh \frac{L_i^{\sim i}}{2}=p_{X_i\vert \underline{Y}^{\sim i}}(0\vert \underline{y}^{\sim i})- p_{X_i\vert \underline{Y}^{\sim i}}(1\vert \underline{y}^{\sim i})
\end{align*}
The second is the estimate of  $X_i$ when both $y_i$ and $y_j$ are not available
\begin{align*}
T_{i}^{\sim ij}=\tanh\frac{L_{i}^{\sim ij}}{2}=p_{X_i\vert \underline{Y}^{\sim ij}}(0\vert \underline{y}^{\sim ij})- p_{X_i\vert \underline{Y}^{\sim ij}}(1\vert \underline{y}^{\sim ij}) 
\end{align*}
Finally we will also need the extrinsic estimate of the mod $2$ sum $X_i\oplus X_j$ when both $y_i$ and $y_j$ are not available,
\begin{align*}
T_{ij}^{\sim ij}=\tanh\frac{L_{ij}^{\sim ij}}{2}=p_{X_i\oplus X_j\vert \underline{Y}^{\sim ij}}(0\vert \underline{y}^{\sim ij})- p_{X_i\oplus X_j\vert \underline{Y}^{\sim ij}}(1\vert \underline{y}^{\sim ij}) 
\end{align*}
It is practical to work in terms of a modified Gibbs average $\langle \sigma_A \rangle_{\sim i}$ which means that $l_i=0$ , in other words $y_i$ is not available. Similarly we introduce the averages $\langle \sigma_X \rangle_{\sim ij}$, in other words both $y_i$ and $y_j$ are unavailable. One has
\begin{equation}\nonumber
T_i^{\sim i}=\langle\sigma_i\rangle_{\sim i}, \qquad T_i^{\sim ij}=\langle\sigma_i \rangle_{\sim ij}, \qquad T_{ij}^{\sim ij}=\langle \sigma_i \sigma_j\rangle_{\sim ij}
\end{equation}
The extrinsic brackets $\langle-\rangle_{\sim i}$ and $\langle-\rangle_{\sim ij}$ are related to the usual ones
$\langle - \rangle$ by the following formulas derived in appendix \ref{appendix-A},
\begin{equation}\label{sigmatosigmao}
\langle\sigma_i\rangle_{\sim i}=\frac{\langle\sigma_i\rangle -t_i}{1-\langle\sigma_i\rangle t_i}
\end{equation}
and 
\begin{equation}\label{sigmai}
\langle\sigma_i\rangle_{\sim ij}={\displaystyle \frac{\langle\sigma_i\rangle- t_i-\langle\sigma_i\sigma_j\rangle t_j+ t_i t_j\langle\sigma_j\rangle}{1-\langle\sigma_i\rangle t_i-\langle\sigma_j\rangle t_j+\langle\sigma_i\sigma_j\rangle t_i t_j}}
\end{equation}
\begin{align}\label{sigmatosigmaoo}
\langle\sigma_i\sigma_j\rangle_{\sim ij}=\frac{\langle\sigma_i\sigma_j\rangle-t_i\langle\sigma_j\rangle -\langle\sigma_i\rangle t_j+ t_i t_j}{1-\langle\sigma_i\rangle t_i-\langle\sigma_j\rangle t_j+\langle\sigma_i\sigma_j\rangle t_i t_j} 
\end{align}

\section{The Correlation Formula}\label{section-3}

A derivation of propositon \ref{lem:firstderi} and of \eqref{eq:becfderi}, \eqref{eq:awgnfderi} within the formalism outlined in section \ref{section-2} can be found in \cite{Macris}.

\subsection{Proof of proposition \ref{lem:secderi}}
For any ${\rm BMS}(\underline\epsilon)$ channel and linear code  we have from \eqref{eq:condentropy}
\begin{align*}
\frac{\partial}{\partial\epsilon_i}H_n(\underline{X}\mid\underline{Y})
&
=\mathbb{E}_{\underline{l}^{\sim j}}\biggl[\int_{-\infty}^{+\infty} dl_j \frac{\partial c_L(l_j)}{\partial\epsilon_j} (\ln Z-\frac{l_j}{2})\biggr] 
\end{align*}
The second equality follows by permutation symmetry of code bits.
Differentiating once more, we get
\begin{align}
\frac{\partial^2}{\partial\epsilon_i\partial\epsilon_j}H_n(\underline{X}\mid\underline{Y})=\delta_{ij}S_1+(1-\delta_{ij})S_2 \label{eq:secderi1}
\end{align}
where 		
\begin{align}
S_1=\expt_{\underline{l}^{\sim i }}\biggl[\int_{-\infty}^{+\infty}
dl_i\frac{\partial^2c_L(l_i)}{\partial\epsilon_i^2}(\ln Z-\frac{l_i}{2}) \biggr]\label{eq:I}
\end{align}
and
\begin{align}\nonumber  
S_2=\expt_{\underline{l}^{\sim ij}}\biggl[\int_{-\infty}^{+\infty}
dl_idl_j\frac{\partial c_L(l_i)}{\partial \epsilon_i}\frac{\partial c_L(l_j)}{\partial\epsilon_j}(\ln Z-\frac{l_i}{2})\biggr]
\end{align}

First we consider $S_1$. Let 
$$
Z_{\sim i}=\sum_{\underline{\sigma}}\prod_{c\in\code}\frac{1}{2}(1+\sigma_{\partial c})\prod_{k\neq i} e^{\frac{l_k}{2}\sigma_k}
$$ 
be the partition function for the Gibbs measure $\langle -\rangle_{\sim i}$ introduced in section \ref{section-2} and
consider
$$
\ln\frac{Z}{Z_{\sim i}}=\ln\langle e^{\frac{l_i}{2}\sigma_i}\rangle_{\sim i}
$$
Using the identity 
\begin{equation}\label{iden}
e^{\frac{l_i}{2}\sigma_i}= e^{\frac{l_i}{2}}\frac{1+t_i\sigma_i}{1+t_i}
\end{equation}
we get
$$
\ln Z - \frac{l_i}{2}=\ln Z_{\sim i}+\ln\biggl(\frac{1+t_i\langle\sigma_i\rangle_{\sim i}}{1+ t_i}\biggr)
$$
When we replace this expression in the integral \eqref{eq:I} we see that the contribution of $\ln Z_{\sim i}$ vanishes because this later quantity is independent of $l_i$. Indeed
$$
\int_{-\infty}^{+\infty}
dl_i\frac{\partial ^2c_L(l_i)}{\partial \epsilon_i^2}\ln Z_{\sim i}=
\ln Z_{\sim i} \,\frac{\partial ^2}{\partial \epsilon_i^2}\int_{-\infty}^{+\infty}
dl_1 c_L(l_i) =0
$$
since $c_L(l_i)$ is a normalized probability distribution.
Then, using \eqref{sigmatosigmao} leads to
\begin{align}
S_1&=  \int_{-1}^{+1}
dt_i\frac{\partial^2c_D(t_i)}{\partial\epsilon_i^2} \expt_{ \underline{t}^{\sim i}}\biggl[\ln\biggl(\frac{1+t_i\langle\sigma_i\rangle_{\sim i}}{1+ t_i}\biggr)
\biggr]
\label{eqIextrinsic} \\ &
=- \int_{-1}^{+1}
dt_i\frac{\partial^2c_D(t_i)}{\partial\epsilon_i^2} \expt_{ \underline{t}^{\sim i}}\biggl[\ln\biggl(\frac{1-t_i\langle\sigma_i\rangle}{1 - t_i}\biggr)\biggr]
\label{eqIintrinsic}
\end{align}
which (because of \eqref{eq:intrinsic}) coincides with the first term in the correlation formula.

Now we consider the term $S_2$. Notice that
\begin{align}\nonumber
\int_{-\infty}^{+\infty}
dl_idl_j\frac{\partial c_L(l_i)}{\partial\epsilon_i}\frac{\partial c_L(l_j)}{\partial \epsilon_j}\frac{l_j}{2}
= \int_{-\infty}^{+\infty}
dl_j\frac{\partial c_L(l_j)}{\partial\epsilon_j}\frac{l_j}{2}
\frac{\partial}{\partial\epsilon_i}\int_{-\infty}^{+\infty}dl_i c_L(l_i)=0  
\end{align} 
Thus we can rewrite $S_2$ as
\begin{align}\nonumber
S_2=\expt_{\underline{l}^{\sim ij}}\biggr[\int_{-\infty}^{+\infty}
dl_idl_j\frac{\partial c_L(l_i)}{\partial\epsilon_i}\frac{\partial c_L(l_j)}{\partial\epsilon_j}(\ln Z-\frac{l_i}{2}-\frac{l_j}{2})\biggl]
\end{align}
Let
$Z_{\sim ij}=\sum_{\underline{\sigma}}\prod_{c\in{\cal C}}\frac{1}{2}(1+\sigma_{\partial c})\prod_{k\neq i,j} e^{\frac{l_k}{2}\sigma_k}$ be the partition function for the Gibbs measure $\langle\cdot\rangle_{\sim ij}$, and
consider 
\begin{align}\nonumber
\ln\frac{Z}{Z_{\sim ij}}&=\ln\langle e^{\frac{l_i}{2}\sigma_i+\frac{l_j}{2}\sigma_j}\rangle_{\sim ij} 
\end{align}
Using again \eqref{iden} we get
\begin{align}\nonumber
\ln Z-\frac{l_i}{2}-\frac{l_j}{2}=\ln Z_{\sim ij} + \ln \biggl(\frac{1+t_i\langle\sigma_i\rangle_{\sim ij}+t_j\langle\sigma_j\rangle_{\sim ij}+t_it_j\langle\sigma_i\sigma_j\rangle_{\sim ij}}{1+t_i+t_j+t_it_j}\biggr) 
\end{align} 
As before the contribution of $\ln Z_{\sim ij}$ vanishes because it is  independent of $l_i$, $l_j$. Similarly we have
\begin{align}\nonumber
\int_{-\infty}^{+\infty}
dl_idl_j\frac{\partial c_L(l_i)}{\partial\epsilon_i}\frac{\partial c_L(l_j)}{\partial\epsilon_j} \ln (1+t_i\langle\sigma_i\rangle_{\sim ij})&=
{\rm same~with~i~and~j~exchanged}
\\ \nonumber &
=0 
\end{align}
\begin{align}\nonumber
\int_{-\infty}^{+\infty}
dl_idl_j\frac{\partial c(l_i)}{\partial\epsilon_i}\frac{\partial c(l_j)}{\partial\epsilon_j} \ln (1+t_i)&=
{\rm same~with~i~and~j~exchanged}
\\ \nonumber &
=0 
\end{align}
Using these four identities then leads to
\begin{align}\label{eq:IIextrinsic}
S_2= &\expt_{\underline{l}^{\sim ij}}\biggl[\int_{-1}^{+1}
dt_idt_j\frac{\partial c_D(t_i)}{\partial\epsilon_i}\frac{\partial c_D(t_j)}{\partial\epsilon_j}
\\ \nonumber &
\times \ln \bigg(\frac{1+t_i\langle\sigma_i\rangle_{\sim ij}+t_j\langle\sigma_j\rangle_{\sim ij}+t_it_j\langle\sigma_i\sigma_j\rangle_{\sim ij}}{1+t_i\langle\sigma_i\rangle_{\sim ij}+t_j\langle\sigma_j\rangle_{\sim ij}+t_it_j\langle\sigma_i\rangle_{\sim ij}\langle\sigma_j\rangle_{\sim ij}}\bigg)\biggr]
\end{align}
To get the formulas in terms of usual averages we use the relations 
(\ref{sigmai}), (\ref{sigmatosigmaoo}).
Hence
\begin{align}
S_2= \expt_{\underline{l}^{\sim ij}}\biggl[\int_{-1}^{+1}
dt_idt_j\frac{\partial c_D(t_i)}{\partial\epsilon_i}\frac{\partial c_D(t_j)}{\partial\epsilon_j}\ln \bigg(\frac{1-t_i\langle\sigma_i\rangle -t_j\langle\sigma_j\rangle +t_it_j\langle\sigma_i\sigma_j\rangle}{1-t_i\langle\sigma_i\rangle - t_j\langle\sigma_j\rangle +t_it_j\langle\sigma_i\rangle\langle\sigma_j\rangle}\bigg)\biggr]
\label{S2final}
\end{align}
Because of \eqref{eq:intrinsic} and \eqref{eq:Tij} this coincides with the second term in the correlation formula. The proposition now follows from \eqref{eq:secderi1}, \eqref{eqIintrinsic} and \eqref{S2final}.  

\subsection{Expressions in terms of the spin-spin correlation}

\vskip 0.5cm

\noindent{\bf The BEC.} 
From $c_D(t)= (1-\epsilon)\delta(t-1) +\epsilon\delta(t)$, the second derivative in terms of extrinsic quantities (formulas \eqref{eqIextrinsic} and \eqref{eq:IIextrinsic}) reduces to
\begin{align*}
\frac{\partial^2}{\partial\epsilon_i\partial\epsilon_j}H_n(\underline X\mid \underline Y)&=(1-\delta_{ij})\expt_{\underline{t}^{\sim ij}}\biggl[\ln\biggl(\frac{1+\langle\sigma_i\rangle_{\sim ij}+\langle\sigma_j\rangle_{\sim ij}+\langle\sigma_i\sigma_j\rangle_{\sim ij}}
{1+\langle\sigma_i\rangle_{\sim ij}+\langle\sigma_j\rangle_{\sim ij}+\langle\sigma_i\rangle_{\sim ij}\langle\sigma_j\rangle_{\sim ij}}\biggr)\biggr]
\end{align*}
There are various ways to see that for the BEC any Gibbs average $\langle \sigma_A\rangle$ or $\langle\sigma_A\rangle_{\sim ij}$ takes values in $\{0,1\}$. A heuristic explanation is that bits (or their mod $2$ sums) are either perfectly known or erased. A more formal explanation follows from a Nishimori identity\footnote{We will use various such identities. A  proof of their most general form can be found in \cite{MacrisIT07}.  A general reference is \cite{Nishimori-book}.}
combined with the Griffith-Kelly-Sherman (GKS) correlation inequality \cite{MacrisIT07}. For example, $\mathbb{E}[\langle \sigma_A\rangle^2]=
\mathbb{E}[\langle \sigma_A\rangle]$ (Nishimori) and $\langle \sigma_A\rangle\geq 0$ (GKS). Thus $\langle\sigma_A\rangle(1-\langle\sigma_A\rangle)$ is a positive random variable with zero expectation and is therefore equal to $0$ with probability one. These remarks imply that
\begin{align*}
\frac{\partial^2}{\partial\epsilon_i\partial\epsilon_j}H_n(\underline X\mid \underline Y)&=\frac{1}{\epsilon_i\epsilon_j}(1-\delta_{ij})\expt_{\underline{t}}\biggl[\ln\biggl(\frac{1+\langle\sigma_i\rangle+\langle\sigma_j\rangle+\langle\sigma_i\sigma_j\rangle}
{1+\langle\sigma_i\rangle+\langle\sigma_j\rangle+\langle\sigma_i\rangle\langle\sigma_j\rangle}\biggr)\biggr]
\end{align*}
Note that in deriving the last expression we used the fact that $l_i=\infty$ ($l_j=\infty$) implies that $\sigma_i=+1$ ($\sigma_j=+1$) which makes the logarithm term equal to zero. From the previous remarks we also have
\begin{align*}
\ln\bigl(1+\langle\sigma_i\rangle+\langle\sigma_j\rangle+\langle\sigma_i\sigma_j\rangle\bigr)= &
(\ln 2)\bigl(\langle\sigma_i\rangle+\langle\sigma_j\rangle+\langle\sigma_i\sigma_j\rangle\bigr) \nonumber \\ &+(\ln
3-2\ln
2)\bigl(\langle\sigma_i\rangle\langle\sigma_j\rangle+\langle\sigma_i\rangle\langle\sigma_i\sigma_j\rangle+\langle\sigma_j\rangle\langle\sigma_i\sigma_j\rangle\bigr) \nonumber \\
&+(5\ln 2-3\ln 3)\langle\sigma_i\rangle
\langle\sigma_j\rangle\langle\sigma_i\sigma_j\rangle
\end{align*} 
and
\begin{align*}
\ln\bigl(1+\langle\sigma_i\rangle+\langle\sigma_j\rangle+\langle\sigma_i\rangle\langle\sigma_j\rangle\bigr)=(\ln 2)\bigl(\langle\sigma_i\rangle+\langle\sigma_j\rangle\bigr) 
\end{align*}
The difference of the two logarithms is simplified using the following four Nishimori identities,
\begin{align*}
\expt_{\underline{t}}[\langle\sigma_i\rangle\langle\sigma_j\rangle]
=
\expt_{\underline{t}}[\langle\sigma_i\rangle\langle\sigma_i\sigma_j\rangle] 
=
\expt_{\underline{t}}[\langle\sigma_j\rangle\langle\sigma_i\sigma_j\rangle]
=\expt_{\underline{t}}[\langle\sigma_i\sigma_j\rangle\langle\sigma_i\rangle\langle\sigma_j\rangle]
\end{align*}
Finaly we obtain the simple expression
\begin{align*}
\frac{\partial^2}{\partial\epsilon_i\partial\epsilon_j} H_n(\underline X\mid \underline Y)&=\frac{\ln 2}{\epsilon_i\epsilon_j}(1-\delta_{ij})\expt_{\underline{t}}\bigl[\langle\sigma_i\sigma_j\rangle-\langle\sigma_i\rangle\langle\sigma_j\rangle\bigr]
\nonumber \\
&
=
\frac{\ln 2}{\epsilon_i\epsilon_j}(1-\delta_{ij})\expt_{\underline{t}}\bigl[T_{ij}-T_i T_j\bigr]
\end{align*}
Let us point out that the second GKS inequality (for the BEC) implies that $\langle\sigma_i\sigma_j\rangle-\langle\sigma_i\rangle\langle\sigma_j\rangle\geq 0$, thus the correlation takes values in $\{0,1\}$ and we have 
$\expt_{\underline{t}}\bigl[T_{ij}-T_i T_j\bigr]=\expt_{\underline{t}}\bigl[(T_{ij}-T_i T_j)^2\bigr]$.

\vskip 0.5cm

\noindent{\bf The BIAWGNC.} From the explicit form  
\begin{equation}\nonumber
 c_L(l)= \frac{1}{\sqrt{2\pi\epsilon^{-2}}}\exp\biggl(-\frac{(l-\epsilon^{-2})^2}{2\epsilon^{-2}}\biggr)
\end{equation}
 one can show that the correlation formula reduces to  
\begin{align*}
\frac{\partial^2}{\partial\epsilon_i^{-2}\partial\epsilon_j^{-2}}H_n(\underline X\mid\underline Y)&= \mathbb{E}_{\underline{t}}[(\langle \sigma_i \sigma_j\rangle - \langle \sigma_i\rangle\langle \sigma_j\rangle)^2]
\nonumber \\ &
=
\mathbb{E}_{\underline{t}}\bigl[\big(T_{ij}-T_iT_j\big)^2\bigr]
\nonumber
\end{align*}
Otherwise diffrentiating \eqref{eq:condentropy} thanks to
\begin{equation}\nonumber
 \frac{d^2 c_L(l)}{(d\epsilon^{-2})^2}= \biggl(-\frac{\partial}{\partial l}+\frac{\partial^2}{\partial l^2}\biggr)^2 c_L(l)
\end{equation}
and using integration by parts also leads to this simpler form. This route is much simpler and the details can be found in \cite{MacrisIT07}.

\vskip 0.5cm
\noindent{\bf Highly noisy BMS channels.} We use the extrinsic form of 
the correlation formula given by \eqref{eqIextrinsic} and \eqref{eq:IIextrinsic}.
First we expand the logarithms in $S_1$ and $S_2$ in powers of $t_i$ and $t_j$ and then use various Nishimori identities. After some tedious algebra (see Appendices \ref{appendix-B} and \ref{appendix-C}) we can organize the expansion in powers of the channel parameters \eqref{parameters}.
In the high noise regime this expansion is absolutely convergent.
To lowest order we have
\begin{align}
&\frac{\partial^2}{\partial\epsilon_i\partial\epsilon_j}H_n(
\underline{X}\mid\underline{Y})
= \delta_{ij}S_1+(1-\delta_{ij})S_2
 \nonumber
 \\ &
 \approx \frac{1}{2} \delta_{ij}{m_2}^{(2)} (\expt_{\underline{l}}[\langle\sigma_i\rangle^{2}]-1) + 
\frac{1}{2}(1-\delta_{ij}) [{m_1}^{(2)}]^2\expt_{ \underline{l}}\biggl[\bigl(\langle\sigma_i\sigma_j\rangle
 - \langle\sigma_i\rangle\langle\sigma_j\rangle\bigr)^2\biggr] + \ldots
 \nonumber
 \\ &
 = \frac{1}{2} \delta_{ij}{m_2}^{(2)} (\expt_{\underline{t}}[T_i^{2}]-1) + 
\frac{1}{2}(1-\delta_{ij}) [{m_1}^{(2)}]^2\expt_{ \underline{t}}\biggl[\bigl(T_{ij} - T_i T_j\bigr)^2\biggr] + \ldots
\end{align}
The second derivative of the conditional entropy is directly related to the average square of the code-bit or spin-spin correlation. 

\section{The Interpolation Method}\label{section-4}

We use the interpolation method in the form  developed by Montanari. 
As explained in \cite{Montanari} it is difficult to establish directly the bounds for the standard ensembles. 
 Rather, one introduces a ``multi-Poisson'' ensemble which approximates the standard ensemble. 
Once the bounds are derived for the multi-Poisson ensemble they are extended to the standard ensemble by a limiting procedure. The interpolation construction is fairly complicated so that it helpful to briefly review the simpler pure Poisson case. 

\subsection{Poisson ensemble}

We introduce the ensemble Poisson-LDPC$(n,1-r,P)= {\cal P}$ where $n$ is the block length, $r$ the rate and
$P(x)=\sum_{k}P_k x^k$ the check degree distribution. 
A bipartite graph from the Poisson ensemble is constructed
as follows. The graph has $n$ variable nodes. For any $k$ choose a
Poisson number $m_k$ of check nodes with mean $n(1-r)P_k$. Thus graph
has a total of $m=\sum_k m_k$ check nodes which is also a Poisson variable with mean $n(1-r)$. For each check node $c$ of degree
$k$, choose $k$ variable nodes uniformly at random and connect them to $c$. One
can show that the left degree distribution concentrates around a Poisson distribution $\Lambda_{\cal P}(x)=e^{P^\prime(1)(1-r)(x-1)}$. In other words the fraction $\Lambda_l$ of variable nodes with degree $l$ is Poisson with mean $P^\prime(1)(1-r)$.

The main idea
behind the interpolation technique is to recursively remove the check node constraints and compensate them
 with extra observations $U$ distributed as \eqref{U-variable} where $d_V$ is a trial distribution to be optimized in the final inequality. One can interpret these extra observations as coming from a repetition code whose rate is tuned in a such a way that the total design rate $r$ remains fixed. More precisely
let $s\in[0,1]$ be 
an interpolating parameter.
At ``time'' $s$ we have a Poisson-LDPC$(n, (1-r)s,P)={\cal P}_s$ code. Besides the usual channel outputs $l_i$, each node $i$ receives $e_i$ extra i.i.d observations $U_a^i$, $a=1,...,e_i$, where $e_i$ is Poisson  with mean $n(1-r)(1-s)$ (so the total effective rate is fixed to $r$). The interpolating Gibbs measure is
\begin{align}
\mu_{s}(\underline\sigma)&= \frac{1}{Z_{s}}\prod_{c}\frac{1}{2}(1+\sigma_{\partial c})\prod_{i=1}^n e^{(\frac{l_i}{2}+\sum_{a=1}^{e_i}U_a^{i})\sigma_i} \label{eq:interpolated_measure}
\end{align}
Here $\prod_c$ is a product over checks of a given graph in the ensemble ${\cal P}_s$.
At $s=1$ one recovers the original measure while at $s=0$ (no checks) we have a simple product measure (corresponding to a repetition code) which is tailored to yield the 
replica symmetric entropy $h_{RS}[d_V;\Lambda_{\cal P},P]$ (up to an extra constant).

The central result of \cite{Montanari} is the sum rule
\begin{align}
\expt_{{\cal P}}[h_n]=h_{RS}[d_V;\Lambda_{\cal P}, P]+\int^{1}_{0}R_n(s) ds \label{eq:calculus} 
\end{align}
Let us explain the notation. The first term on the right hand side 
$h_{RS,{\cal P}}[d_V;\Lambda_{\cal P}, P]$ is the replica symmetric functional of section \ref{section-1} evaluated for the Poisson ensemble.
The remainder term $R_n(s)$ is
\begin{align*}
R_n(s)=\sum_{p=1}^{\infty} \frac{1}{2p(2p-1)} \expt_s\biggl[\big\langle P(Q_{2p})-P'(q_{2p})(Q_{2p}-q_{2p})-P(q_{2p})\big\rangle_{2p,s}\biggr]
\end{align*}
with $q_{2p}=\expt_{V}[(\tanh V)^{2p}]$ and 
$Q_{2p}$ the overlap parameters  
\begin{equation}\label{overlap-poisson}
Q_{2p}=\frac{1}{n}\sum_{i=1}^{n}\sigma_i^{(1)}\sigma_i^{(2)}\cdots\sigma_i^{(2p)}
\end{equation}
Here $\sigma_{i}^{(\alpha)}, \alpha=1,2,\dots,2p$ are $2p$ independent copies (replicas) of the spin $\sigma_i$ and  
$\langle-\rangle_{2p,s}$ is the Gibbs bracket associated to the product measure (replica measure)
$$
\prod_{\alpha=1}^{2p} \mu_{s}(\underline\sigma^{(\alpha)})
$$

\subsection{Multi-Poisson ensemble}
The multi-Poisson-LDPC$(n,\Lambda,P,\gamma)= {\cal MP}$ ensemble, is a more elaborate construction which allows to approximate a target 
${\rm LDPC}(n,\Lambda,P)$ ensemble. Its parameters are the block length $n$, the target  variable and check node degree distributions $\Lambda(x)$ and $P(x)$ and the real number $\gamma$ which 
controls the closeness to the standard ensemble. We recall that variable and check node degrees have finite maximum degrees.
The construction of a bipartite graph from the multi-Poisson ensemble proceeds via rounds: the process
starts with a high rate code and at each round one adds a very small number of check
nodes till one ends up with a code

with almost the desired rate and degree distribution. A graph process $\graph_t$ is defined for discrete times  $t=0,..., t_{max}$, $t_{max}=\lfloor\Lambda'(1)/\gamma\rfloor-1$ as
follows. For $t=0$, $\graph_0$ has no check nodes and has $n$ variable nodes.
The set of variable nodes is partitioned
into the subsets $\mathcal{V}_l$ of cardinality $n\Lambda_l$ for every $l$ and every node $i\in \mathcal{V}_l$ is decorated with $l$ free sockets. The number $d_i(t)$ keeps track of the number of free sockets on node $i$ once round $t$ is completed. So for $t=0$, $\graph_0$ has no check nodes and each variable node $i\in \mathcal{V}_l$ has $d_i(0)=l$ free sockets.
At round $t$, $\graph_{t}$ is constructed from $\graph_{t-1}$ as follows. For all $k$, choose a Poisson number $m_k^t$ of check nodes with mean $n\gamma P_k/P^\prime(1)$. Connect each outgoing edge of these new degree $k$ check nodes
(added at time $t$) to variable node $i$ according to the probability
$w_i(t)=\frac{d_i(t-1)}{\sum_i d_i(t-1)}$. This is the fraction of free sockets at node $i$ after round $t-1$ was completed.
Once all new check nodes are
connected, update the number of free sockets for each variable node
$d_i(t)=d_i(t-1)-\Delta_i(t)$.
where $\Delta_i(t)$ is the number of times the
variable node $i$ was chosen during the round $t$. 
For $n\to \infty$ this construction yields graphs with variable degree distributions $\Lambda_\gamma(x)$ (the check degree distribution remains $P(x)$). The variational distance between $\Lambda_\gamma(x)$ and $P(x)$ tends to zero as $\gamma\to 0$. 

The interpolating ensemble now uses two parameters $(t_*,s)$ where $t_*\in \{0,...,t_{max}\}$ and $0\leq s\leq \gamma$. For rounds $0,...,t_*-1$ one proceeds exactly as before to obtain a graph
 $\graph_{t_*-1}$.  At the next round $t_*$, one proceeds as before but with $\gamma$ replaced by $s$.
The rate loss is
compensated by adding $e_i$ extra observations for each node $i$, where $e_i$ is a Poisson integer with mean
$n(\gamma-s)w_i(t_*)$. The round is ended by updating the number of free sockets $d_i(t_*)=d_i(t_*-1) - \Delta_i(t_*) - e_i(t_*)$.
Finally, for rounds after $t_*+1,...,t_{max}$ no new check node
is added but for each variable node $i$, $e_i$ external observations are added, where $e_i$ is a Poisson integer with mean
$n\gamma w_i(t_*)$. Moreover the free socket counter is updated as 
$d_i(t)=d_i(t-1)-e_i(t)$. Recall that the external observations are i.i.d copies of the random variable $U$ (see \eqref{U-variable}).

The interpolating Gibbs measure $\mu_{t_*,s}(\underline\sigma)$ has the same form than (\ref{eq:interpolated_measure}) with the appropriate products over checks and extra observations. Let $h_{n,\gamma}$ the conditional entropy of the multi-Poisson ensemble ${\cal MP}$ (corresponding to $t_*=t_{max}$ and $s=\gamma$). Again, the central result of \cite{Montanari} is the sum rule
\begin{equation}\label{eq:interMP}
\expt_{{\cal MP}}[h_{n,\gamma}] = h_{RS}[d_V; \Lambda_\gamma, P]+{\displaystyle
\sum_{t_*=0}^{t_{max}-1}\int^{\gamma}_{0}R_n(t_*,s) ds}+o_n(1)
\end{equation}
Explanations on the notation are in order. The first term $h_{RS, \gamma}[d_V;\Lambda_\gamma,P]$ is the replica symmetric functional of \ref{section-1} evaluated for the multi-Poisson ensemble.
The remainder term $R_n(t_*,s)$ is given by 
\begin{align}
R_n(t_*,s)=\sum_{p=1}^{\infty} \frac{1}{(2p)(2p-1)}\expt_s\biggl[\bigl\langle P(Q_{2p})-P'(q_{2p})(Q_{2p}-q_{2p})-P(q_{2p})\bigr\rangle_{2p,t_*,s}\biggr] \label{eq:mpinter}
\end{align}
where $q_{2p}=\expt_{V}[(\tanh V)^{2p}]$ as before and  
$Q_{2p}$ are modified overlap parameters
\begin{equation}\label{modified-overlap}
Q_{2p}={\displaystyle \sum_{i=1}^{n}w_i(t_*)X_i(t_*)\sigma_i^{(1)}\sigma_i^{(2)}\cdots\sigma_i^{(2p)} } 
\end{equation}
Here as before $\sigma_{i}^{(\alpha)}, \alpha=1,2,\dots,2p$ are $2p$ independent copies (replicas) of the spin $\sigma_i$ and  
$\langle-\rangle_{2p,t_*,s}$ is the Gibbs bracket associated to the product measure
$$
\prod_{\alpha=1}^{2p} \mu_{{t_*,s}}(\underline\sigma^{(\alpha)})
$$ 
The overlap parameter is now more complicated than in the Poisson case because of the (positive) terms $w_i(t_*)$ and $X_i(t_*)$. Here $X_i(t_*)$ are new i.i.d random variables whose precise description is quite technical and can be found in \cite{Montanari}. The reader may think of the terms $w_i(t_*)X_i(t_*)$ as behaving like the $\frac{1}{n}$ factor of
the pure Poisson ensemble overlap parameter \eqref{overlap-poisson}. More precisely the only properties (see Appendix E in \cite{Montanari}) that we need are
\begin{equation}\label{w-prop}
 \sum_{i=1}^n w_i(t_*)=1,\qquad\mathbb{P}\bigl[w_i(t_*)\leq \frac{A}{n}\bigr]\geq 1- e^{-Bn}
\end{equation}
and 
\begin{equation}\label{X}
0\leq X_i(t_*)\leq x,\qquad
\mathbb{E}[x^k]\leq A_k
\end{equation}
for any finite $k$ and finite positive constants $A$, $B$, $A_k$ independent of $n$ (they may depend on some of the other parameters but this turns out to be unimportant).
Finaly we use the shorthand 
$\expt_s[-]$ for 
the expectation with respect to all random variables involved in the interpolation measure. The subscript $s$ is here to remind us that this expectation depends on $s$, afact that is important to keep in mind because the remainder involves an intgral over $s$. When we use $\mathbb{E}$ (without the subscript $s$; as in \eqref{X} for example) it means that the quantity does not depend on $s$. In the sequel the replcated Gibbs bracket $\langle -\rangle_{2p,t_*,s}$ is simply denoted by $\langle - \rangle_s$. There will be no risk of confusion because the only property that we us is its linearity. 

In \cite{Montanari} it is shown that
\begin{equation}\label{monta}
\expt_{\code}[h_n]=\expt_{{\cal MP}}[h_{n,\gamma}]+ O(\gamma^b) +o_n(1)
\end{equation}
where
$O(\gamma^b)$ is uniform in $n$ ($b>0$ a numerical constant) and $o_n(1)$ (depends on $\gamma$) tends to $0$ as $n\to +\infty$. 

In the next paragraph we prove the variational bound
 on the conditional entropy of the multi-Poisson ensemble, namely
\begin{equation}\label{varia}
 \liminf_{n\to +\infty}\expt_{{\cal MP}}[h_{n,\gamma}] \geq 
 h_{RS}[d_V;\Lambda_\gamma,P]
\end{equation}
Note that here $o_n(1)$ again depends on $\gamma$. 
 By combining this bound with \eqref{monta} and taking limits
\begin{equation}
 \liminf_{n\to+\infty}\expt_{\code}[h_n] = \lim_{\gamma\to 0}\liminf_{n\to+\infty}\expt_{{\cal MP}}[h_{n,\gamma}] \geq 
\lim_{\gamma\to 0} h_{RS}[d_V;\Lambda_\gamma, P] = h_{RS}[d_V;\Lambda, P]
\end{equation}
The main theorem \ref{thm:main} then follows by maximizing the right hand side over $d_V$.

\subsection{Proof of the Variational Bound \eqref{varia}}

In view of the sum rule \eqref{eq:interMP} it is sufficient to prove that $\liminf_{n\to +\infty}R_n(t_*,s)\geq 0$.  In the case of a convex $P$ considered in \cite{Montanari} this is immediate because convexity is equivalent to 
\begin{equation*}
 P(Q_{2p}) -P(q_{2p})\geq P^\prime(q_{2p})(Q_{2p}-q_{2p})
\end{equation*}
Note that $P(x)=\sum_kP_kx^k$ is anyway convex for $x\geq 0$ since all $P_k\geq 0$. So if do not assume convexity of the check node degree distribution we have to circumvent the fact that $Q_{2p}$ can be negative. But note
\begin{align*}
\langle Q_{2p}\rangle&={\displaystyle \sum_{i=1}^{n}w_i(t_*) X_i(t_*)\langle
\sigma_i^{(1)}\sigma_i^{(2)}\cdots\sigma_i^{(2p)}\rangle } \nonumber \\
&={\displaystyle \sum_{i=1}^{n}w_i(t_*) X_i(t_*)\langle
\sigma_i^{(1)}\rangle\langle\sigma_i^{(2)}\rangle\cdots\langle\sigma_i^{(2p)}\rangle
} \nonumber \\
&={\displaystyle \sum_{i=1}^{n}w_i(t_*)X_i(t_*)\langle
\sigma_i\rangle^{2p}}\ge 0 
\end{align*}
Therefore we are assured that for any $P$ (i.e not necessarily convex for $x\in\mathbb{R}$) we have 
\begin{equation}\label{conv}
 P(\langle Q_{2p}\rangle) -P(q_{2p})\geq P^\prime(q_{2p})(\langle Q_{2p}\rangle-q_{2p})
\end{equation}
and the proof will follow if we can show that with high probability
\begin{equation*}
 P(Q_{2p}) \approx P(\langle Q_{2p}\rangle)
\end{equation*}
The following concentration estimate will suffice and is proven in section \ref{section-5}. 

\begin{proposition}\label{conj:overlaps}
Fix any $\delta<\frac{1}{4}$. 
On the BEC($\epsilon$) and BIAWGNC($\epsilon$) for a.e $\epsilon$, or on general BMS($\epsilon$) satisfying $H$, we have for a.e $\epsilon$,
\begin{equation}
\lim_{n\to \infty}\int_{0}^{\gamma}ds \prob_s\biggl[|P(Q_{2p})-P(\langle
Q_{2p}\rangle_s)|>\frac{2p}{n^{\delta}}\biggr]=0 
\end{equation} 
Here
$\prob_s(X)$ is the probability distribution $\expt_s\langle \mathbb{I}_X\rangle_{s}$.
\end{proposition}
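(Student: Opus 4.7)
The plan is to reduce the statement, via Markov applied to the thermal variance of $Q_{2p}$, to an identity for spin--spin correlations controlled by Proposition~\ref{lem:secderi}. Since $|Q_{2p}|,|\langle Q_{2p}\rangle_s|\leq\sum_i w_i(t_*)X_i(t_*)\leq x$ and $\expt[x^k]\leq A_k$ for every $k$ by \eqref{X}, the polynomial $P$ is Lipschitz on the relevant range with constant $P'(x)$, and after restricting to the event $\{x\leq n^{\alpha}\}$ (probability $1-o_n(1)$ for any $\alpha>0$) we have $P'(x)=O(n^{\alpha(K-1)})$ with $K$ the maximum check degree. Markov's inequality then gives
\begin{equation*}
\prob_s\bigl[|P(Q_{2p})-P(\langle Q_{2p}\rangle_s)|>2p/n^{\delta}\bigr]\leq\frac{C\,n^{2\delta+2\alpha(K-1)}}{4p^2}\,\expt_s\bigl\langle(Q_{2p}-\langle Q_{2p}\rangle_s)^2\bigr\rangle_s+o_n(1),
\end{equation*}
and taking $\alpha$ as small as we wish it suffices to show that $\int_0^\gamma\expt_s\langle(Q_{2p}-\langle Q_{2p}\rangle_s)^2\rangle_s\,ds$ decays strictly faster than $n^{-2\delta}$.

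Independence of the $2p$ replicas under $\langle\cdot\rangle_s$ gives $\langle Q_{2p}\rangle_s=\sum_i w_iX_i\langle\sigma_i\rangle_s^{2p}$ and $\langle Q_{2p}^2\rangle_s=\sum_{i,j}w_iw_jX_iX_j\langle\sigma_i\sigma_j\rangle_s^{2p}$, so the bound $|a^{2p}-b^{2p}|\leq 2p|a-b|$ for $|a|,|b|\leq 1$ yields
\begin{equation*}
\expt_s\langle(Q_{2p}-\langle Q_{2p}\rangle_s)^2\rangle_s\leq 2p\sum_{i,j}\expt_s\!\bigl[w_iw_jX_iX_j\,|T_{ij}^{(s)}-T_i^{(s)}T_j^{(s)}|\bigr],
\end{equation*}
with $T_i^{(s)}=\langle\sigma_i\rangle_s$ and $T_{ij}^{(s)}=\langle\sigma_i\sigma_j\rangle_s$. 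The crucial observation is that $\mu_{t_*,s}$ is itself the MAP posterior of a BMS$(\epsilon)$-channel transmission for the linear code obtained from $\graph_{t_*}$ by adjoining the extra observations $U_a^i$ as degree-one constraints, so Proposition~\ref{lem:secderi} and the Nishimori/GKS identities of Section~\ref{section-3} apply to $\mu_{t_*,s}$ without modification.

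On the BEC, GKS positivity removes the absolute value and permutation-invariance of the multi-Poisson ensemble gives $(n-1)\,\expt_s[T_{12}^{(s)}-T_1^{(s)}T_2^{(s)}]=(\ln 2)^{-1}d^2\expt_s[h_n]/d(\ln\epsilon)^2$, where the right-hand side is the second derivative in the channel noise of the per-bit conditional entropy of the $s$-interpolating ensemble; on the BIAWGNC a Cauchy--Schwarz step in the $(i,j)$ sum combined with the corresponding formula $(n-1)\expt_s[(T_{12}^{(s)}-T_1^{(s)}T_2^{(s)})^2]=2\,d^2\expt_s[h_n]/d(\epsilon^{-2})^2$ serves the same purpose. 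In either case the second derivative of the entropy in the natural noise parameter is non-negative, so $\expt_s[h_n]$ is convex in that parameter, and since it is also bounded by $\ln 2$, the integrated quantity $\int_0^\gamma d^2\expt_s[h_n]/d\epsilon^2\,ds$, being the $\epsilon$-second derivative of a bounded convex function of $\epsilon$, is a non-negative measure of total mass $O(1)$ uniformly in $n$. Fatou's lemma then yields pointwise a.e.\ control in $\epsilon$; combined with the weight bound $w_i(t_*)\leq A/n$ from \eqref{w-prop}, this gives $\int_0^\gamma\expt_s[\mathrm{var}\,Q_{2p}]\,ds=O(n^{-1})$ on BEC and (through the Cauchy--Schwarz step) $O(n^{-1/2})$ on BIAWGNC, both of which suffice for any $\delta<1/4$.

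The general BMS case under condition~$H$ is handled by replacing the exact correlation formula with its absolutely convergent high-noise expansion from Section~\ref{section-3}, whose leading contribution at $i\neq j$ is $\tfrac{1}{2}[m_1^{(2)}]^2\,\expt[(T_{ij}-T_iT_j)^2]$. The quantitative inequality in Definition~H, namely $(\sqrt 2-1)(5/2)^2|m_1^{(2)}|>\sum_{p\geq 2}(5/2)^{2p}|m_1^{(2p)}|$, is designed precisely so that the higher-order terms of the expansion cannot overwhelm this positive leading square. The main obstacle --- and the source of condition~$H$ --- lies here: securing the perturbative positivity of $d^2\expt_s[h_n]/d\epsilon^2$ uniformly in $n$ and in the interpolation parameter $s\in[0,\gamma]$, so that the convex-in-noise structure exploited above for BEC and BIAWGNC survives for a general BMS channel; once this is in place, the same Fatou-plus-Markov chain completes the argument.
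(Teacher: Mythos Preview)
Your reduction to the thermal variance and then to spin--spin correlations matches the paper's strategy exactly (its first lemma in Section~\ref{section-5}), and your use of Cauchy--Schwarz to land on $\sum_i\expt_s[(T_{1i}^{(s)}-T_1^{(s)}T_i^{(s)})^2]$ together with the correlation formulas is also right. The genuine gap is in the step where you claim that ``Fatou's lemma yields pointwise a.e.\ control in $\epsilon$.'' Knowing that $G_n(\epsilon):=\int_0^\gamma\expt_s[h_n]\,ds$ is convex and bounded tells you that $\int G_n''(\epsilon)\,d\epsilon$ is uniformly $O(1)$, i.e.\ that $G_n''$ has bounded total mass as a measure; it does \emph{not} give $G_n''(\epsilon)=O(1)$ for a.e.\ fixed $\epsilon$, and Fatou cannot extract that either (the second derivatives can and do concentrate near phase--transition values of $\epsilon$ as $n\to\infty$). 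Your claimed $O(n^{-1})$ and $O(n^{-1/2})$ rates therefore do not follow.

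The paper handles exactly this obstruction by integrating against a smooth test function $\psi(\epsilon)$ in the noise variable, then \emph{integrating by parts} once in $\epsilon$ to replace $\frac{d^2}{d\epsilon^2}\expt_s[h_n]$ by $\frac{d}{d\epsilon}\expt_s[h_n]$, which \emph{is} bounded uniformly in $n,s$ (Appendix~\ref{appendix-E}). This gives $\int\psi(\epsilon)\int_0^\gamma\prob_s[\ldots]\,ds\,d\epsilon\leq C\,n^{2\delta-1/2}\to 0$, and only then is dominated convergence along subsequences invoked to pass to an a.e.\ statement in $\epsilon$. A secondary point: for a general BMS channel under $H$, what you need is not positivity of the second derivative but Lemma~\ref{lemma-3}, namely $\sum_i\expt_s[(T_{1i}-T_1T_i)^2]\leq F(\epsilon)+G(\epsilon)\frac{d^2}{d\epsilon^2}\expt_s[h_n]$; the second derivative itself is only shown to be bounded below by $-\rho$, not nonnegative, and it is precisely this inequality that allows the same integration--by--parts trick to go through.
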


This proposition can presumably be strengthened in two directions. First we conjecture that hypothesis $H$ is not needed (this is indeed the case for the BEC and BIAWGNC). Secondly the statement should hold for all $\epsilon$
 except at a finite set of threshold values of $\epsilon$ where the conditional entropy is not  differentiable, and its first derivative is expected to have jumps (except for cycle codes where higher order derivatives are singular). Since we are unable to control the locations of theses jumps our proof only works for Lebesgue almost every $\epsilon$.

We are now ready to complete the proof of the variational bound \eqref{varia}.

\vskip 0.5cm

\noindent{\it End of Proof of \eqref{varia}}. From \eqref{modified-overlap} and \eqref{X}
\begin{equation}\label{overbound}
 \vert Q_{2p} \vert \leq \sum_{i=1}^n w_i(t_*) X_i(t_*) \leq x
\end{equation}
and 
\begin{align}\label{power-k}
\mathbb{E}_s[\langle Q_{2p}^k\rangle_s] \leq A_k
\end{align}
Combined with $q_{2p}\leq 1$, this implies (since the maximal degree of $P$ is finite) that
\begin{equation}\label{remainder-bound}
\expt_s[\langle P(Q_{2p}) -P^\prime(q_{2p})(Q_{2p}-q_{2p}) - P(q_{2p})\rangle_s]\leq C_1
\end{equation}
for some positive constant $C_1$. The only crucial feature here is that this constant does not depend on $n$ and on the number of replicas $2p$ (a more detailed analysis shows that it depends only on the degree of $P(x)$).

Now we split the sum \eqref{eq:mpinter} into terms with $1\leq p\leq n^\delta$ (call this contribution $R_A$) and terms with $p\geq n^\delta$ (call this contribution $R_B$), where $\delta>0$ is the constant of proposition \ref{conj:overlaps}.
For the second contribution \eqref{remainder-bound} implies
\begin{equation}
 R_B\leq C_1\sum_{p\geq n^{\delta}}\frac{1}{2p(2p-1)} =  O(n^{-\delta})
\end{equation}
For the first contribution we write
\begin{align}
 R_A= &\sum_{p\leq n^\delta} \frac{1}{2p(2p-1)} \expt_s[\langle P(Q_{2p})\rangle_s-
P(\langle Q_{2p}\rangle_s)]
\nonumber\\&
+
\sum_{p\leq n^\delta} \frac{1}{2p(2p-1)} 
\expt[P(\langle Q_{2p})\rangle_s) - P^\prime(q_{2p})(\langle Q_{2p})\rangle_s - q_{2p}) -P(q_{2p})]
\nonumber
\end{align}
In this equation, the second sum is positive due to \eqref{conv}. Thus we find
\begin{align}\nonumber
 R_n(t_*,s)&=R_A+R_B
\\ \nonumber &
\geq \sum_{p\leq n^\delta} \frac{1}{2p(2p-1)} \expt[\langle P(Q_{2p})\rangle_s
-
P(\langle Q_{2p}\rangle_s)]
-  O(n^{-\delta})
\nonumber
\end{align}
Below we use proposition \ref{conj:overlaps} to show that for almost every $\epsilon$ in the appropriate range
\begin{equation}\label{fluctuation}
 \lim_{n\to +\infty}\int_0^\gamma ds \sum_{p\leq n^\delta} \frac{1}{2p(2p-1)} \expt_s[\langle P(Q_{2p})\rangle_s -
P(\langle Q_{2p}\rangle_s)]=0
\end{equation}
which implies by Fatou's lemma
\begin{equation}\nonumber
 \liminf_{n\to +\infty}\sum_{t_*=0}^{t_{max}-1}\int_0^\gamma R_n(t_*, s) ds \geq 0
\end{equation}
and thus proves \eqref{varia} for almost every $\epsilon$ in the appropriate range. A general convexity argument allows to extend this result to all $\epsilon$ in the same range. Indeed convexity arguments imply that both sides of the inequality \eqref{varia} are continuous functions of $\epsilon$\footnote{At this point one could use arguments involving physical degradation if BMS($\epsilon$) is degraded as a function of $\epsilon$. But we take a more direct route that does not assume physical degraddation as a function of $\epsilon$}. To show continuity of the left hand side we use inequality \eqref{C} in Appendix \ref{appendix-C}: it implies that there exists a positive number $\rho$ (independent of $\epsilon$ and $n$) such that $\frac{d^2}{d\epsilon^2}\mathbb{E}_s[h_{n,\gamma}]\geq - \rho$. Therefore  $\mathbb{E}_s[h_{n,\gamma}] +\frac{\rho}{2}\epsilon^2$ is convex in $\epsilon$; so the $\liminf_{n\to +\infty}$ is also convex and thus continuous on any open $\epsilon$ set. To show continuity of the right hand side we first note that for each $d_V$, $h_{RS}$ is a linear functional of the channel distribution $c_L(l)$; thus the $\sup_{d_V}$ is a convex functional of 
$c_L(l)$; thus it is continuous in any open $\epsilon$ where
$c_L(l)$ varies smoothly in $\epsilon$ (this last point can be made more precise using tools from functional analysis).

Let us now prove \eqref{fluctuation}. First we set
\begin{equation*}
 F_{2p}=\bigl\vert \langle P(Q_{2p})\rangle_s -
P(\langle Q_{2p}\rangle_s)\bigr\vert
\end{equation*}
 and use Cauchy-Schwarz and then \eqref{power-k} to obtain
\begin{align}
 \expt[F_{2p}] & = \expt_s\bigl[F_{2p}\mathbb{I}_{F_{2p} \leq \frac{2p}{n^\delta}}\bigr]
+ \expt_s\bigl[F_{2p}\mathbb{I}_{F_{2p}\geq \frac{2p}{n^\delta}}\bigr]
\nonumber \\ &
\leq 
\frac{2p}{n^\delta} + \expt_s\bigl[ F_{2p}^2\bigr]^{1/2} \mathbb{P}_s\bigl[F_{2p}\geq \frac{2p}{n^\delta}\bigr]^{1/2}
\nonumber \\ &
\leq
\frac{2p}{n^\delta} + C_2 \mathbb{P}_s\bigl[F_{2p}\geq \frac{2p}{n^\delta}\bigr]^{1/2}
\nonumber 
\end{align}
for some positive constant $C_2$ independent of $n$ and $p$ (depending only on the degree of $P(x)$). 
Thus 
\begin{align}
 \int_0^\gamma ds & \sum_{p\leq n^\delta}\frac{1}{2p(2p-1)}\expt[F_{2p}]
\nonumber\\&
\leq 
\frac{1}{n^\delta}\sum_{p\leq n^\delta}\frac{1}{2p-1}
+
C_2\int_0^\gamma ds \sum_{p\leq n^\delta}\frac{1}{2p(2p-1)}
\mathbb{P}_s\bigl[F_{2p}\geq \frac{2p}{n^\delta}\bigr]^{1/2}
\nonumber \\ &
\leq 
O\bigl(\frac{\ln n^\delta}{n^{\delta}}\bigr)
+
C_2\sum_{p\leq n^\delta}\frac{\sqrt{\gamma}}{2p(2p-1)}\biggl(\int_0^\gamma ds
\mathbb{P}_s\bigl[F_{2p}\geq \frac{2p}{n^\delta}\bigr]\biggr)^{1/2}
\nonumber
\end{align}
In the second inequality we have permuted the integral with a finite sum and  used Cauchy-Schwarz. Finaly we can apply proposition \ref{conj:overlaps} and Lebesgue's dominated convergence theorem to the last sum over $p$, to conclude that \eqref{fluctuation} holds.

\section{Fluctuations of overlap parameters}\label{section-5} 
In this section we prove proposition \ref{conj:overlaps}.
The proofs are done directly for the multi-Poisson ensemble. 
We start by a relation between the overlap fluctuation and the spin-spin
correlation.

\begin{lemma}
For any BMS($\epsilon$) channel there exists a finite constant $C_3$
independent of $n$ and $p$ (dpending only on the maximal check degree) such that
\begin{equation}
 \mathbb{P}_s\biggl[\bigl\vert P(Q_{2p})- P(\langle Q_{2p}\rangle_s)\bigr\vert\geq \frac{2p}{n^\delta}\biggr]
\leq 
\frac{C_3}{p^2n^{2\delta-\frac{1}{2}}}
\biggl(\sum_{i=1}^n\mathbb{E}_s\bigl[(\langle \sigma_1\sigma_i\rangle_s - \langle\sigma_1\rangle_s
\langle\sigma_i\rangle_s)^2\bigr]\biggr)^{1/2}
\end{equation}
\end{lemma}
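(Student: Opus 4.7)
The plan is to combine Chebyshev's inequality with the replica factorization of the variance of $Q_{2p}$, extracting a factor $n^{-1/2}$ from the smallness of the weights $w_i(t_*)$. First I would apply Markov's inequality to the squared deviation,
\begin{equation}\nonumber
\mathbb{P}_s\biggl[|P(Q_{2p})-P(\langle Q_{2p}\rangle_s)|\geq\frac{2p}{n^\delta}\biggr]\leq\frac{n^{2\delta}}{4p^2}\,\mathbb{E}_s\bigl[\langle(P(Q_{2p})-P(\langle Q_{2p}\rangle_s))^2\rangle_s\bigr].
\end{equation}
Since $|Q_{2p}|$ and $|\langle Q_{2p}\rangle_s|$ are both controlled by the random variable $x$ through \eqref{overbound} and \eqref{X}, and the maximum degree $k_{\max}$ of $P$ is finite, on $[-x,x]$ the polynomial $P$ is Lipschitz with constant $L(x)=k_{\max}(1+x)^{k_{\max}-1}$, all of whose moments are $n$-independent. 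This reduces the problem to bounding $\mathbb{E}_s[L(x)^2\langle(Q_{2p}-\langle Q_{2p}\rangle_s)^2\rangle_s]$.

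The next step uses replica independence. Because $\langle\cdot\rangle_s$ is the product measure $\prod_\alpha\mu_{t_*,s}(\underline\sigma^{(\alpha)})$, the expectation of $\prod_\alpha\sigma_i^{(\alpha)}\sigma_j^{(\alpha)}$ factorises across $\alpha$, and \eqref{modified-overlap} gives
\begin{equation}\nonumber
\langle Q_{2p}^2\rangle_s-\langle Q_{2p}\rangle_s^2=\sum_{i,j=1}^n w_i(t_*)w_j(t_*)X_i(t_*)X_j(t_*)\bigl(\langle\sigma_i\sigma_j\rangle^{2p}-\langle\sigma_i\rangle^{2p}\langle\sigma_j\rangle^{2p}\bigr).
\end{equation}
The elementary inequality $|a^{2p}-b^{2p}|\leq 2p|a-b|$, valid for $|a|,|b|\leq 1$, then replaces the difference of $2p$-th powers by $2p$ times the spin--spin correlation $c_{ij}=\langle\sigma_i\sigma_j\rangle-\langle\sigma_i\rangle\langle\sigma_j\rangle$ (which is non-negative by the GKS inequality), producing the extra factor of $p$ that will combine with the $p^{-2}$ coming from Chebyshev.

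It remains to extract $n^{-1/2}$ from $\mathbb{E}_s\bigl[\sum_{i,j}w_iw_j X_iX_j|c_{ij}|\bigr]$. My plan is to apply Cauchy--Schwarz termwise,
\begin{equation}\nonumber
\mathbb{E}_s[w_iw_jX_iX_j|c_{ij}|]\leq\mathbb{E}_s[(w_iw_jX_iX_j)^2]^{1/2}\,\mathbb{E}_s[c_{ij}^2]^{1/2}.
\end{equation}
The first factor is bounded by $C/n^2$ via a truncation on the event $\{w_i\leq A/n,\,w_j\leq A/n\}$: on this event \eqref{w-prop} yields $(w_iw_j)^2\leq A^4/n^4$ and \eqref{X} gives $\mathbb{E}_s[X_i^2X_j^2]\leq A_4$, while the complement has probability $\leq 2e^{-Bn}$ and contributes negligibly thanks to the moment bounds on $x$. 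The second factor, summed over $(i,j)$, is then controlled by Cauchy--Schwarz in the form $\sum_{i,j}\mathbb{E}_s[c_{ij}^2]^{1/2}\leq n\bigl(n\sum_i\mathbb{E}_s[c_{1i}^2]\bigr)^{1/2}$, where the permutation symmetry of the multi-Poisson ensemble collapses the double sum $\sum_{i,j}\mathbb{E}_s[c_{ij}^2]$ to $n\sum_i\mathbb{E}_s[c_{1i}^2]$. Collecting these estimates and returning to the Chebyshev bound then produces the claimed inequality.

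The main obstacle is the joint handling of the graph-valued random variables $w_i(t_*),X_i(t_*)$ together with the Gibbs correlations: these quantities are highly correlated through the underlying graph process, and only the weights $w_i(t_*)$ admit a uniform high-probability bound. The truncation argument must be dovetailed with the moment bounds of \eqref{X} so that the exponentially small complement event does not spoil the polynomial-in-$n$ behavior of the main term, and the symmetry-based collapse of $\sum_{i,j}\mathbb{E}_s[c_{ij}^2]$ requires separating the diagonal $i=j$ contribution (which is benign since $c_{ii}\leq 1$) from the off-diagonal one.
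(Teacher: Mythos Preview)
Your proposal is correct and follows essentially the same route as the paper: Chebyshev, a Lipschitz bound on $P$ via $P'(x)$, the replica factorization of the variance of $Q_{2p}$, the elementary inequality $|a^{2p}-b^{2p}|\le 2p|a-b|$, Cauchy--Schwarz to separate the $(w_i,X_i)$ factors from the spin correlations, and finally permutation symmetry to collapse the double sum. The only cosmetic difference is that the paper applies Cauchy--Schwarz once to the whole double sum $\sum_{i,j}\mathbb{E}_s[\,\cdot\,]$, whereas you apply it termwise and then again to $\sum_{i,j}\mathbb{E}_s[c_{ij}^2]^{1/2}$; both give the same $n^{-1/2}$ gain. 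One small caveat: your appeal to GKS for the nonnegativity of $c_{ij}$ is not available for a general BMS($\epsilon$) channel (the paper only invokes GKS for the BEC), but since you work with $|c_{ij}|$ throughout it is not actually needed.
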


\begin{proof}
 Using the identity 
\begin{equation}
Q_{2p}^k-\langle Q_{2p}\rangle_{s}^k = (Q_{2p}-\langle
Q_{2p}\rangle_{s})\sum_{l=0}^{k-1}Q_{2p}^{k-l-1}\langle
Q_{2p}\rangle_{s}^l
\end{equation}
and \eqref{X} we get
\begin{align}
\vert P(Q_{2p}) - \langle P(Q_{2p}\rangle_s\vert & = 
\vert Q_{2p}-\langle Q_{2p}\rangle_s\vert\vert\sum_{k} P_k\sum_{l=0}^{k-1}Q_{2p}^{k-l-1}\langle Q_{2p}\rangle_s^l\vert
\nonumber\\&
\leq 
\vert Q_{2p}-\langle Q_{2p}\rangle_s\vert\sum_k k P_k x^{k-1}
\nonumber\\&
\leq
P^\prime(x) \vert Q_{2p}-\langle Q_{2p}\rangle\vert
\nonumber
\end{align}
Here $x$ is the bound in \eqref{overbound}. 
Therefore applying the Chebycheff inequality
\begin{equation}\label{chebycheff}
\mathbb{P}_s\biggl[|P(Q_{2p})-P(\langle 
Q_{2p}\rangle_s)|\geq\frac{2p}{n^{\delta}}\biggr]
\leq
\frac{n^{2\delta}}{4p^2}
\expt_s\biggl[P^\prime(x)^2\bigl(\langle Q_{2p}^2\rangle_s-\langle
Q_{2p}\rangle_s^2\bigr)\biggr]
\end{equation}
From the definition of the overlap parameters it follows that
\begin{align}
 \langle Q_{2p}^2\rangle_s - \langle Q_{2p}\rangle_s^2
& =
\sum_{i,j=1}^{n}w_i(t_*)w_j(t_*)X_i(t_*)X_j(t_*)\bigl(\langle\sigma_i\sigma_j
\rangle_s^{2p} - \langle\sigma_i\rangle_s^{2p}\langle\sigma_j\rangle_s^{2p}\bigr)
\nonumber\\&
\leq 
2p \sum_{i,j=1}^n x^2w_i(t_*) w_j(t_*)\bigl(\langle\sigma_i\sigma_j\rangle - \langle\sigma_i\rangle\langle\sigma_j\rangle\bigr)
\nonumber
\end{align}
Substituting in \eqref{chebycheff} and applying Cauchy-Schwarz to $\sum_{i,j}\expt_s[-]$ we get
\begin{align}\nonumber
\mathbb{P}_s\bigl[|P(Q_{2p})-
P(\langle Q_{2p}\rangle_s)|\geq\frac{2p}{n^{\delta}}\bigr]
 \leq &
\frac{n^{2\delta}}{2p} 
\biggl(\sum_{i,j=1}^n \expt_s[x^4P^\prime(x)^4w_i(t_*)^2 w_j(t_*)^2  ]\biggr)^{1/2}
\nonumber\\&
\times
 \biggl(\sum_{i,j=1}^n\expt_s[(\langle\sigma_i\sigma_j\rangle_s - \langle\sigma_i\rangle_s\langle\sigma_j\rangle_s)^2]\biggr)^{1/2}
\nonumber
\end{align}
From \eqref{w-prop}, \eqref{X} it is easy to see that for any $i,j$
\begin{align}
\expt_s[x^4P^\prime(x)^4w_i(t_*)^2 w_j(t_*)^2 ]
\leq \frac{C_3^2}{n^4}
\nonumber
\end{align}
where $C_3$ is independent of $n$.
It follows that
\begin{align}\nonumber
 \mathbb{P}_s\biggl[|P(Q_{2p}) & -
P(\langle Q_{2p}\rangle_s)|\geq \frac{2p}{n^{\delta}}\biggr]
\\ \nonumber &  
 \leq
\frac{n^{2\delta-1}}{2p}C_3 
 \biggl(\sum_{i,j=1}^n\expt_s[(\langle\sigma_i\sigma_j\rangle_s - \langle\sigma_i\rangle_s\langle\sigma_j\rangle_s)^2]\biggr)^{1/2}
\nonumber \\ &
=\frac{n^{2\delta-\frac{1}{2}}}{2p}C_3
\biggl(\sum_{i=1}^n\expt_s[(\langle\sigma_i\sigma_1\rangle_s - \langle\sigma_i\rangle_s\langle\sigma_1\rangle_s)^2]\biggr)^{1/2}
\label{missing}
\end{align}
In the last equality we have used the symmetry of the ensemble with respect to variable node permutations.
\end{proof}

Denote by $h_{n,\gamma}(t_*,s)$ the entropy of the $\mu_{t_*,s}$ interpolating measure. Note that this should not be confused with the multi-Poisson ensemble entropy $h_{n,\gamma}$ (which corresponds to $t_*=t_{\max}$ and $s=\gamma$).

\begin{lemma}\label{lemma-3}
 For the BEC and BIAWGNC with any noise value and for general BMS($\epsilon$) channels satisfying $H$ we have
\begin{equation}\label{inequ}
 \sum_{i=1}^n\expt_s[(\langle\sigma_1\sigma_i\rangle_s - \langle\sigma_1\rangle_s
\langle\sigma_i\rangle_s)^2]\leq F(\epsilon)+ G(\epsilon)\frac{d^2}{d\epsilon^2}\expt_s[h_{n,\gamma}(t_*,s)]
\end{equation}
where $F(\epsilon)$ and $G(\epsilon)$ are two finite constants depending only on the channel parameter.
\end{lemma}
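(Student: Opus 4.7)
The plan is to exploit the correlation formula of Proposition~\ref{lem:secderi}, applied to the interpolating Gibbs measure $\mu_{t_*,s}$, which expresses $\frac{d^2}{d\epsilon^2}\expt_s[h_{n,\gamma}(t_*,s)]$ as a sum of a diagonal contribution and an off-diagonal one of the form $\sum_{i\neq 1}\int dt_1 dt_i\,\frac{\partial c_D(t_1)}{\partial\epsilon}\frac{\partial c_D(t_i)}{\partial\epsilon}\mathbb{E}_s[g_2(t_1,t_i)]$. I will match, term by term, this off-diagonal contribution against the quantity $\sum_i\expt_s[(\langle\sigma_1\sigma_i\rangle_s-\langle\sigma_1\rangle_s\langle\sigma_i\rangle_s)^2]$ that we wish to bound.

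For the BEC and BIAWGNC the inequality is essentially immediate from formulas already recorded in Section \ref{section-3}. On the BEC one has $\frac{d^2}{(d\ln\epsilon)^2}\expt_s[h_{n,\gamma}(t_*,s)] = \ln 2\sum_{i\neq 1}\expt_s[T_{1i}-T_1T_i]$, and since the Nishimori and GKS inequalities force $T_{1i}-T_1T_i\in\{0,1\}$ we have $T_{1i}-T_1T_i=(T_{1i}-T_1T_i)^2$; converting from $\ln\epsilon$ to $\epsilon$ by the chain rule produces a bounded first-derivative contribution that is absorbed into $F(\epsilon)$ and gives $G(\epsilon)=\epsilon^2/\ln 2$. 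For the BIAWGNC one uses $\frac{d^2}{(d\epsilon^{-2})^2}\expt_s[h_{n,\gamma}(t_*,s)]=\tfrac12\sum_i\expt_s[(T_{1i}-T_1T_i)^2]$ and again converts back to $\epsilon$ by the chain rule, absorbing the first-derivative term into $F(\epsilon)$ and obtaining $G(\epsilon)=\epsilon^6/2$.

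For a general BMS$(\epsilon)$ channel satisfying $H$ the proof is more delicate. I would start from the extrinsic form of $g_2$ (equation \eqref{eq:IIextrinsic}) and expand the logarithm as a double Taylor series in $t_i, t_j$ (this is exactly the type of computation done in Appendices \ref{appendix-B} and \ref{appendix-C}). Integrating against $\partial c_D(t_i)/\partial\epsilon$ and $\partial c_D(t_j)/\partial\epsilon$ produces coefficients $m_1^{(2p)}m_1^{(2q)}$; mixed parity terms vanish after using Nishimori identities that replace odd spin products by even ones. Reorganising the resulting series one obtains the structural identity
\begin{equation}\nonumber
\frac{d^2}{d\epsilon^2}\expt_s[h_{n,\gamma}(t_*,s)]=\tfrac12[m_1^{(2)}]^2\sum_{i\neq 1}\expt_s\bigl[(\langle\sigma_1\sigma_i\rangle_s-\langle\sigma_1\rangle_s\langle\sigma_i\rangle_s)^2\bigr]+A_n+B_n,
\end{equation}
where $A_n$ is the diagonal remainder, uniformly bounded in $n$ by $\sum_p|m_2^{(2p)}|/(2p(2p-1))$ which is finite under $H$, and $B_n$ is the series of higher-order off-diagonal contributions whose $(p,q)$ term is a multiple of $m_1^{(2p)}m_1^{(2q)}$ times a sum $\sum_{i\neq 1}\expt_s[R_{pq}(\langle\sigma_1\sigma_i\rangle_s,\langle\sigma_1\rangle_s,\langle\sigma_i\rangle_s)]$ with $R_{pq}$ a bounded polynomial.

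The main obstacle is controlling $B_n$. I would bound each term of $B_n$ by a constant contribution plus a small multiple of $\sum_{i\neq 1}\expt_s[(\langle\sigma_1\sigma_i\rangle_s-\langle\sigma_1\rangle_s\langle\sigma_i\rangle_s)^2]$. Here the $(5/2)^{2p}$ weights in hypothesis $H$ arise as the natural geometric radii when extracting the leading squared-correlation contribution from $R_{pq}$ via further Nishimori identities and elementary bounds on the bounded spin averages (the $5/2$ providing the required slack over the trivial bound $1$). The strict inequality $(\sqrt{2}-1)(5/2)^2|m_1^{(2)}|>\sum_{p\geq 2}(5/2)^{2p}|m_1^{(2p)}|$ is exactly what guarantees that the total absorbed coefficient satisfies $\rho<1$, yielding
\begin{equation}\nonumber
\tfrac12[m_1^{(2)}]^2(1-\rho)\sum_{i\neq 1}\expt_s\bigl[(\langle\sigma_1\sigma_i\rangle_s-\langle\sigma_1\rangle_s\langle\sigma_i\rangle_s)^2\bigr]\leq\frac{d^2}{d\epsilon^2}\expt_s[h_{n,\gamma}(t_*,s)]+F_0(\epsilon),
\end{equation}
from which \eqref{inequ} follows with $G(\epsilon)=2/([m_1^{(2)}]^2(1-\rho))$ and $F(\epsilon)=G(\epsilon)F_0(\epsilon)$. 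The combinatorics of keeping track of all the $(p,q)$-contributions and verifying that the Nishimori-reduced remainders $R_{pq}$ are dominated by the appropriate $(5/2)^{2p+2q}$-weighted squared-correlation bound is where the real work lies; this is precisely why hypothesis $H$ takes its somewhat intricate quantitative form rather than simply requiring finiteness of the series.
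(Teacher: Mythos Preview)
Your treatment of the BEC and BIAWGNC is correct and is exactly what the paper does. For general BMS channels your strategy is also the paper's, but you have glossed over one step that is not automatic and accounts for part of hypothesis~$H$ you never invoke.

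When you expand the extrinsic form \eqref{eq:IIextrinsic} in powers of $t_i,t_j$ and integrate, the Gibbs brackets that emerge are the \emph{extrinsic} ones $\langle-\rangle_{\sim ij}$, not the intrinsic $\langle-\rangle_s$ you wrote in your ``structural identity''. Appendix~\ref{appendix-B} carries out precisely the expansion you describe and obtains in \eqref{S2-exp} a series whose leading term is $\tfrac12[m_1^{(2)}]^2\,\expt_{\underline t^{\sim ij}}\bigl[(\langle\sigma_i\sigma_j\rangle_{\sim ij}-\langle\sigma_i\rangle_{\sim ij}\langle\sigma_j\rangle_{\sim ij})^2\bigr]$. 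Your absorption argument (which is correct, and your reading of the role of the strict inequality in $H$ is accurate) then yields the desired inequality but with the \emph{extrinsic} squared correlation on the left-hand side; this is equation \eqref{almost-lemma} in the paper, with constants $\tilde F,\tilde G$.

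Passing from extrinsic to intrinsic is a separate, non-trivial step. From \eqref{sigmai} and \eqref{sigmatosigmaoo} one computes
\[
\langle\sigma_i\sigma_j\rangle-\langle\sigma_i\rangle\langle\sigma_j\rangle
=\bigl(\langle\sigma_i\sigma_j\rangle_{\sim ij}-\langle\sigma_i\rangle_{\sim ij}\langle\sigma_j\rangle_{\sim ij}\bigr)\,R_{ij},
\qquad R_{ij}\le \frac{4}{(1-t_i^2)(1-t_j^2)},
\]
and since the extrinsic bracket is independent of $t_i,t_j$ one picks up, after squaring and averaging over $t_i,t_j$, the factor $16\bigl(\sum_{p\ge 0}(p+1)m_0^{(2p)}\bigr)^2$. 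This is exactly why the \emph{first} series condition in $H$ is there; without it the conversion factor need not be finite and your bound does not transfer to the intrinsic correlations appearing in \eqref{inequ}. The final $F,G$ are $\tilde F,\tilde G$ divided by this factor.
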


The proof of lemma \ref{lemma-3} is based on the correlation formula of section \ref{section-1}. These are true for any linear code ensemble so they are in particular true for the interpolating $(t_*,s)$ ensemble\footnote{in fact one has to check that the addition of $\sum_{a=1}^{e_i} U_a^i$ to $l_i$ does not change the derivation and the final formulas. For this it sufices to follow the calculation of section \ref{section-3}}.
For the BEC and BIAWGNC we have already shown the two equalities \eqref{bec-second-derivative} and \eqref{biawgnc-second-derivative}: thus the inequality \eqref{inequ} is in fact an equality for appropriate values of $F$ and $G$.
The case of general (but highly noisy) BMS channels is presented in appendix \ref{appendix-C}. A converse inequality can also be proven by the methods of appendices \ref{appendix-B} and \ref{appendix-C}.

\vskip 0.5cm

\begin{proof}[{\it Proof of proposition \ref{conj:overlaps}.}] Note that for all points of the parameter space $(\epsilon,s)$ such that the second derivative of the average conditional entropy is bounded uniformly in $n$ the proof immediately follows from \eqref{missing}, \eqref{inequ} (and the last inequality before that one) by choosing $\delta<\frac{1}{4}$. However, in the large block length limit $n\to+\infty$, genericaly the first derivative of the average conditional entropy has jumps for some threshold values of $\epsilon$ (these values depend on the interpolation parameter $s$). This means that for these threshold values the second derivative cannot be bounded uniformly in $n$. Since we cannot control these locations we introduce a test function $\psi(\epsilon)$:
non negative, infinitely differentiable and with small enough bounded support included in the range of $\epsilon$ satisfying $H$.
We consider the averaged quantity
\begin{equation}\label{Q}
\mathcal{Q}=\int d\epsilon \psi(\epsilon)\int_0^\gamma ds 
\mathbb{P}_s\biggl[|P(Q_{2p})-
P(\langle Q_{2p}\rangle)|\geq\frac{2p}{n^{\delta}}\biggr]
\end{equation}
Writing $\psi(\epsilon)=\sqrt{\psi(\epsilon)}\sqrt{\psi(\epsilon)}$  Cauchy-Schwarz implies
\begin{equation*}
\mathcal{Q}\leq \int_0^\gamma ds\biggl(\int d\epsilon \psi(\epsilon) \mathbb{P}_s\biggl[|P(Q_{2p})-
P(\langle Q_{2p}\rangle)|\geq\frac{2p}{n^{\delta}}\biggr]^2\biggr)^{1/2}
\end{equation*}
Combining this inequality with \eqref{missing} and \eqref{inequ} we get
\begin{align}
\mathcal{Q} & \leq \frac{ n^{2\delta-\frac{1}{2}}}{2p} C_3\int_0^\gamma ds\biggl(\int d\epsilon \psi(\epsilon) \bigl(F(\epsilon)+G(\epsilon)\frac{d^2}{d\epsilon^2}\mathbb{E}_s[h_{n,\gamma}(t_*,s)]\bigr)\biggr)^{1/2}
\nonumber \\ &
=
\frac{ n^{2\delta-\frac{1}{2}}}{2p} C_3\int_0^\gamma ds\biggl( \int d\epsilon \psi(\epsilon)F(\epsilon)-\int d\epsilon  \frac{d}{d\epsilon}\bigl(\psi(\epsilon)G(\epsilon)\bigr)\frac{d}{d\epsilon}\mathbb{E}_s[h_{n,\gamma}(t_*,s)]\biggr)^{1/2}
\nonumber
\end{align}
Note that from the bounds in appendix \ref{appendix-C} $F(\epsilon)$, $G(\epsilon)$ and $G^\prime(\epsilon)$ are integrable except possibly at the edge of the $\epsilon$ range defined by $H$. This is not a problem because we can take the support of $\psi(\epsilon)$ away from such points or alternatively take a $\psi(\epsilon)$ which vanishes 
sufficiently fast at these points.
Moreover the first derivative of the average conditional entropy is bounded uniformly in $n$ and $s$ (see appendix \ref{appendix-E}) by a constant $k(\epsilon)$ that has at most a power singularity at $\epsilon=0$, and again this is not a problem. Thus  by choosing $0<\delta< \frac{1}{4}$ we obtain
\begin{equation*}
 \lim_{n\to+\infty} \mathcal{Q} = 0
\end{equation*}
Applying Lebesgue's dominated convergence theorem to convergent subsequences (of the integrand of $\int d\epsilon \psi(\epsilon)$ in \eqref{Q}) we deduce that
\begin{equation*}
 \int d\epsilon\psi(\epsilon)\lim_{n_k\to +\infty}\int_0^\gamma ds 
\mathbb{P}_s[|P(Q_{2p})-
P(\langle Q_{2p}\rangle_s)|\geq\frac{2p}{n_k^{\delta}}] = 0
\end{equation*}
which implies that along any convergent subsequences, for almost all $\epsilon$
\begin{equation}
 \lim_{n_k\to +\infty}\int_0^\gamma ds 
\mathbb{P}_s\biggl[|P(Q_{2p})-
P(\langle Q_{2p}\rangle_s)|\geq\frac{2p}{n_k^{\delta}}\biggr] = 0
\end{equation}
as long as $\delta\leq \frac{1}{4}$. Now we apply this last statement to  two subsequences that attain the $\liminf$ and the $\limsup$ (on the intersection of the two measure one $\epsilon$ sets). This proves that the $\lim_{n\to +\infty}$ exists and vanishes.
\end{proof}

\section{Conclusion}\label{conclusion}

The main new tool introduced in this paper are relationships between the second derivative of the conditional entropy and correlation functions or mutual information between code bits. This allowed us to estimate the overlap fluctuations in order to get a better handle on the remainder.
Some aspects of our analysis bear some similarity with techniques introduced by Talagrand \cite{Talagrand-book} but is independent. One difference is that we use specific symmetry properties of the communications problem.

We expect that the technique developped here can be extended 
to remove the restriction to high noise (condition $H$). Indeed the only place in the analysis where we need this restriction is lemma \ref{lemma-3}. For the BEC and BIAWGNC the lemma is trivialy satisfied for any noise level (with appropriate constants). Another issue that would be worthwhile investigating is whether the related inequalities of paragraph \ref{subsection-mutual} and the converse of lemma \ref{lemma-3} can be derived irrespective of the noise level.

The next obvious problem is to prove the converse of the variational bound (theorem \ref{thm:main}).

For this one should show that the remainder vanishes when $d_V$ is replaced by the maximizing distribution of $h_{RS}[d_V; \Lambda, P]$. This program has been carried out explicitely in the case of the BEC and the Poisson ensemble \cite{Kudekar-Korada-Macris-nice}. It would be desirable to extend this to more general ensembles and channels but the problem becomes quite hard. However a similar program has been succesfuly carried out for a p-spin model with gauge symmetry\footnote{In the present context gauge symmetry and channel symmetry are equivalent} (see \cite{Korada-Macris-seattle}). A solution of these problems would allow for a rigorous determination of MAP thresholds and would extend our understanding of the intimate relationship between BP and MAP decoding.

\appendix

\section{Appendix A}\label{appendix-A}
We prove the identities \eqref{sigmatosigmao},
\eqref{sigmai}
, \eqref{sigmatosigmaoo}. By definition
\begin{equation*}
 \langle\sigma_i e^{-\frac{l_i}{2}\sigma_i}\rangle
=\frac{1}{Z}\sum_{\underline\sigma}\sigma_i\prod_c\frac{1}{2}(1+\sigma_{\partial c})\prod_{j\neq i}e^{\frac{l_i}{2}\sigma_i}
\end{equation*}
and 
\begin{equation*}
 \langle e^{-\frac{l_i}{2}\sigma_i}\rangle
=\frac{1}{Z}\sum_{\underline\sigma}\prod_c\frac{1}{2}(1+\sigma_{\partial c})\prod_{j\neq i}e^{\frac{l_i}{2}\sigma_i}
\end{equation*}
Thus
\begin{equation*}
 \langle\sigma_i\rangle_{\sim i}= \frac{\langle\sigma_i e^{-\frac{l_i}{2}\sigma_i}\rangle}{\langle e^{-\frac{l_i}{2}\sigma_i}\rangle}
\end{equation*}
and plugging the identity
\begin{align*}
e^{-\frac{l_i}{2}\sigma_i}=e^{-\frac {l_i}{2}}\frac{1-\sigma_it_i}{1-t_i}
\end{align*}
 in the brackets immediately leads to \eqref{sigmatosigmao}.
For the second and third identities we proceed similarly. Namely,
\begin{equation*}
 \langle\sigma_i\rangle_{\sim ij}= \frac{\langle\sigma_i e^{-\frac{l_i}{2}\sigma_i}e^{-\frac{l_j}{2}\sigma_j}\rangle}{\langle e^{-\frac{l_i}{2}\sigma_i}e^{-\frac{l_j}{2}\sigma_j}\rangle}
\end{equation*}
and 
\begin{equation*}
 \langle\sigma_i\sigma_j\rangle_{\sim ij}= \frac{\langle\sigma_i\sigma_j e^{-\frac{l_i}{2}\sigma_i}e^{-\frac{l_j}{2}\sigma_j}\rangle}{\langle e^{-\frac{l_i}{2}\sigma_i}e^{-\frac{l_j}{2}\sigma_j}\rangle}
\end{equation*}
Plugging 
\begin{align*}
e^{-\frac{l_i}{2}\sigma_i}e^{-\frac{l_j}{2}\sigma_j}=e^{\frac{l_i+l_j}{2}}\frac{1-\sigma_it_i-\sigma_j t_j +\sigma_i\sigma_j t_i t_j}{1-t_i-t_j + t_it_j}
\end{align*}
in the brackets, leads immediately to \eqref{sigmai} and \eqref{sigmatosigmaoo}.

\section{Appendix B}\label{appendix-B}

We indicate the main steps of the derivation of the full high noise expansion for 
$$
\frac{\partial^2}{\partial\epsilon_i\partial\epsilon_j}H_n(\underline X\mid\underline Y) =\delta_{ij} S_1 +(1-\delta_{ij}) S_2
$$
The expansion for $S_1$ is given by \eqref{S1-exp} and that for $S_2$ by \eqref{S2-exp}.
They are derived in a form that is suitable to prove lemma \ref{lemma-3} of section \ref{section-5} (see appendix \ref{appendix-C}).
For this later proof we need to extract a square correlation at each order as in \eqref{S2-exp}. This is achieved here through the use of appropriate remarquable Nishimori identities, and in order to use these we take the extrinsic forms \eqref{eqIextrinsic} and \eqref{eq:IIextrinsic} of $S_1$ and $S_2$.

Let us start with $S_1$ which is simple. Using the power series expansion of $\ln(1+x)$ we have
\begin{equation*}
 \ln\biggl(\frac{1+t_i\langle\sigma_i\rangle_{\sim i}}{1+t_i}\biggr)
= \sum_{p=1}^{+\infty} \frac{(-1)^{p+1}}{p}t_1^{p}(\langle\sigma_i\rangle_{\sim i}^p-1)
\end{equation*}
This yields an infinite series for $S_1$ which we will now simplify.
Because of the Nishimori identities 
\begin{equation*}
 \mathbb{E}[t_i^{2p-1}] = \mathbb{E}[t_i^{2p}],\qquad
 \mathbb{E}_{\underline t^{\sim i}}[\langle\sigma_i\rangle_{\sim i}^{2p-1}] = \mathbb{E}_{\underline t^{\sim i}}[\langle\sigma_{\sim i}\rangle_1^{2p}]
\end{equation*}
we can combine odd and even terms and get
\begin{equation}\label{S1-exp}
 S_1 = \sum_{p=1}^{+\infty}\frac{m_2^{(2p)}}{2p(2p-1)}\bigl(\mathbb{E}_{\underline {t}^{\sim i}}[\langle\sigma_i\rangle_{\sim i}^{2p}] -1\bigr)
\end{equation}
This series is absolutely convergent as long as 
\begin{equation*}
 \sum_{p=1}^{+\infty}\frac{m_2^{(2p)}}{2p(2p-1)}< +\infty
\end{equation*}
which is true for channels satisfying $H$.

In the rest of the appendix we deal with $S_2$ which is considerably more complicated. However the general idea is the same as above. First we use the expansion of $\ln(1+x)$ to get
\begin{equation}\label{log-expansion}
 \ln\Bigg(\frac{1+\langle\sigma_i\rangle_{\sim ij }t_i+\langle\sigma_j\rangle_{\sim ij}t_j+\langle\sigma_i\sigma_j\rangle_{\sim ij}t_it_j}{1+\langle\sigma_i\rangle_{\sim ij}t_i+\langle\sigma_j\rangle_{\sim ij}t_j+\langle\sigma_i\rangle_{\sim ij}\langle\sigma_j\rangle_{\sim ij}t_it_j}\Bigg) = {\rm I} - {\rm II} - {\rm III}
\end{equation}
where
\begin{align*}
{\rm I}=\sum_{p=1}^{\infty}\frac{(-1)^{p+1}}{p}\bigg(\langle\sigma_i\rangle_{\sim ij}t_i+\langle\sigma_j\rangle_{\sim ij}t_j+\langle\sigma_i\sigma_j\rangle_{\sim ij}t_it_j\bigg)^p
\end{align*}
\begin{align*}{\rm II}=\sum_{p=1}^{\infty}\frac{(-1)^{p+1}}{p}t_i^p\langle\sigma_i\rangle_{\sim ij}^p,\qquad
{\rm III}=\sum_{p=1}^{\infty}\frac{(-1)^{p+1}}{p}t_j^p\langle\sigma_j\rangle_{\sim ij}^p 
\end{align*}
We expand the multinomial in ${\rm I}$ 
\begin{align*}
\sum_{\substack{k_a, k_b, k_c \\ k_a+k_b+k_c=p}}\frac{p!}{k_a!k_b!k_c!}t_i^{k_a+k_c} t_j^{k_b+k_c}
\langle\sigma_i\rangle_{\sim ij}^{k_a}\langle\sigma_j\rangle_{\sim ij}^{k_b}\langle\sigma_i\sigma_j\rangle_{\sim ij}^{k_c}
\end{align*}
and subtract the terms ${\rm II}$ and ${\rm III}$. Then only terms that have powers of the form $t_i^kt_j^l$ with $k, l \ge 1$ will survive in \eqref{log-expansion}. Moreover because of the identities $\mathbb{E}[t_i^{2k-1}] = \mathbb{E}[t_i^{2k}]$ and 
$\mathbb{E}[t_j^{2l-1}] = \mathbb{E}[t_j^{2l}]$
we find for $S_2$
\begin{align}
 S_2 = & \sum_{k\geq l\geq 1}^{+\infty} m_1^{(2k)} m_1^{(2l)}\bigl(T_{00}+T_{01}+T_{10}+T_{11}\bigr) \nonumber
 \\ &
+
\sum_{l> k\geq 1}^{+\infty} m_1^{(2k)} m_1^{(2l)}\bigl(T_{00}^\prime+T_{01}^\prime+T_{10}^\prime+T_{11}^\prime\bigr)
\label{first-exp}
\end{align}
with (we abuse notation by not indicating the $(kl)$ and $(ij)$ dependence in the $T$ and $T^\prime$ factors) 
\begin{align*}
T_{\kappa\lambda} = \sum_{p=2k-\kappa}^{2k-\kappa+2l-\lambda}&\frac{(-1)^{p+1}}{p}
\frac{p!}{(p-(2l-\lambda))!(p-(2k-\kappa))!(2k-\kappa+2l-\lambda-p)!}
\\ \nonumber &
\times\mathbb{E}_{\underline t^{\sim ij}}\biggl[\langle\sigma_i\rangle_{\sim ij}^{p-(2l-\lambda)}
\langle\sigma_j\rangle_{\sim ij}^{p-(2k-\kappa)}\langle\sigma_i\sigma_j\rangle_{\sim ij}^{2k-\kappa+2l-\lambda-p}\biggr]
\nonumber
\end{align*}
and 
\begin{equation*}
 T_{\kappa\lambda}^\prime= {\rm exchange~k, l~and~\kappa,\lambda~and~i, j} 
\end{equation*}
The next simplification step occurs by using the Nishimori identity for the expectation in the above formula
\begin{align*}
&\expt_{\underline{t}^{\sim ij}}\biggl[\langle\sigma_i\rangle_{\sim ij}^{m_1}\langle\sigma_j \rangle_{\sim ij}^{m_2}\langle\sigma_i\sigma_j \rangle_{\sim ij}^{m_3}\biggr] =
\expt_{\underline{t}^{\sim ij}}\biggl[\langle\sigma_i^{m_1}\sigma_j^{m_2}(\sigma_i\sigma_j)^{m_3}\rangle_{\sim ij}\langle\sigma_i\rangle_{\sim}^{m_1}\langle\sigma_j \rangle_{\sim ij}^{m_2}\langle\sigma_i\sigma_j \rangle_{\sim ij}^{m_3}\biggr] \nonumber
\end{align*}
and using $\sigma_i\in \{\pm 1\}$,
to ``linearize'' the terms $(\sigma_i\sigma_j)^{m_1}\sigma_i^{m_2}\sigma_j^{m_3}$. Tedious but straighforward algebra then yields
\begin{align*}
\sum_{\kappa,\lambda}T_{\kappa,\lambda}  
=
\sum_{p=2k-1}^{2k+2l-1}&\frac{(-1)^{p+1}}{p(p+1)}\frac{(p+1)!} {(p+1-2k)!(p+1-2l)!(2k+2l-p-1)!}
\\ \nonumber &
\times \mathbb{E}_{\underline{t}^{\sim ij}}\biggl[\langle\sigma_i\rangle_{\sim}^{p-2l+1}\langle\sigma_j\rangle_{\sim ij}^{p-2k+1}(\langle \sigma_i\sigma_j\rangle_{\sim ij}^{2k+2l-1-p}\biggr]
\nonumber
\end{align*}
A similar formula obtained by exchanging $k,l$ and $i,j$ holds for 
$\sum_{\kappa\lambda}T_{\kappa\lambda}^\prime$. Replacing these sums in \eqref{first-exp} yields a high noise expansion for $S_2$.

However this is not yet pratical for us because we need to extract a general square correlation factor $\bigl(\langle \sigma_i\sigma_j\rangle_{\sim ij} - \langle \sigma_i\rangle_{\sim ij}\langle\sigma_j\rangle_{\sim ij}\bigr)^2$. The fact that this is possible is a ``miracle'' that comes out of the Nishimori identities that were used. Setting
\begin{equation*}
X=\langle \sigma_i\rangle_{\sim ij}\langle \sigma_j\rangle_{\sim ij},\qquad
Y=\langle \sigma_i\sigma_j\rangle_{\sim ij}
\end{equation*}
and using the change of variables $m=p-2k+1$ the last expression becomes ($k\geq l$)
\begin{align*}
\mathbb{E}_{\underline{t}^\sim ij}\biggl[
\frac{\langle \sigma_i\rangle_{\sim ij}^{2k-2l}}{(2l)!}\sum_{m=0}^{2l}
(-1)^{m}{2l \choose m} X^m Y^{2l-m}(m+2k-2)\cdots(m+2k-(2l-1)) \biggr]
\end{align*}
One can check that this is equal to
\begin{align*}
\frac{\langle\sigma_i\rangle^{2k-2l}_{\sim ij}}{(2l)!}X^{2l-2k}\frac{\partial^{2l-2}}{\partial X^{2l-2}}\biggl(X^{2k-2}(X-Y)^{2l}\biggr)
\end{align*}
The latter can be checked by first expanding $(X-Y)^{2l}$ and then differentiating. On the other hand one can use the Leibnitz rule
\begin{align*}
\frac{\partial^{2l-2}}{\partial X^{2l-2}}\biggl(X^{2k-2}(X-Y)^{2l}\biggr)
=\sum_{r=0}^{2l-2}{2l-2\choose r}\frac{\partial^{r}}{\partial X^{r}}X^{2k-2}
\frac{\partial^{2l-2-r}}{\partial X^{2k-2-r}}(X-Y)^{2l}
\end{align*}
to find that the last expectation above is equal to
\begin{align*}
\mathbb{E}_{\underline{t}^\sim ij}\biggl[ (X-Y)^2
 \langle\sigma_i\rangle^{2k-2l}_{\sim ij}\sum_{r=0}^{2l-2}A_{rlk}X^{r}(X-Y)^{2l-r-2}\bigg]
\end{align*}
where
\begin{align*}
A_{rlk}&=\frac{1}{(2l)!}{\textstyle {2l-2\choose r}}[2l]_{r}[2k-2]_{2l-2-r},\qquad [m]_r=m(m-1)\cdots(m-r+1)
\end{align*}
We define $A_{011}=\frac12$.
We proceed similarly for the terms with $k<l$. Finaly one finds
\begin{align}
S_2= & \sum_{k\geq l\geq 1}m_1^{(2k)}m_1^{(2l)}\expt_{\underline{t}^{\sim ij}}\biggl[\biggl(\langle\sigma_i\sigma_j\rangle_{\sim ij}
 - \langle\sigma_i\rangle_{\sim ij}\langle\sigma_i\rangle_{\sim ij}\biggr)^2 
\langle\sigma_i\rangle^{2k-2l}_{\sim ij}
\nonumber
\\ & 
\times
\sum_{r=0}^{2l-2} A_{rlk}
\langle\sigma_i\rangle_{\sim ij}^r\langle\sigma_j\rangle_{\sim ij}^{r}
\biggl(\langle\sigma_i\rangle_{\sim ij}\langle\sigma_j\rangle_{\sim ij}-\langle\sigma_i\sigma_j\rangle_{\sim ij}\biggr)^{2l-2-r}\biggr]
\nonumber
\\  &
+
\sum_{l>k\geq 1} {\rm idem ~with ~k,l ~and~ i,j ~exchanged}
\label{S2-exp}
\end{align}

Let us now briefly justify that the series is absolutely convergent for channels satisfying $H$. We Note the following facts:
$A_{rlk}\leq {2l-2\choose r} 2^{2k-3}$ and $2^{2k-2}3^{2l-2}\leq (\frac{5}{2})^{2k+2l-4}$ for $k\geq l$ together with the version with $k,l$ exchanged. It easily follows that 
\begin{align}
|S_2|\leq &\frac{8}{625}\expt_{\underline{t}^{\sim ij}}\biggl[\biggl(\langle\sigma_i\sigma_j\rangle_{\sim ij}
 - \langle\sigma_i\rangle_{\sim ij}\langle\sigma_i\rangle_{\sim ij}\biggr)^2 \biggr]
\sum_{k,l\geq 1} \bigl(\frac{5}{2}\bigr)^{2k+2l} \vert m_1^{(2k)}m_1^{(2l)}\vert 
\label{conv-inequ}
\end{align}
Thus the series for $S_2$ is absolutely convergent as long as
\begin{equation*}
\sum_{p=1}^{+\infty} \bigl(\frac{5}{2}\bigr)^{2p} \vert m_1^{(2p)}\vert <+\infty
\end{equation*}
Note that we have not attempted to optimize the above estimates.

\section{Appendix C}\label{appendix-C}
We prove lemma \ref{lemma-3} for highly noisy general BMS channels. For this we use the high noise expansion derived in appendix \ref{appendix-B}. There it was derived for a general linear code ensemble, and this is also the framework of the proof below. Of course the result applies to the interpolating ensemble of lemma \ref{lemma-3}. Note that the the final constants $F(\epsilon)$ and $G(\epsilon)$ do not depend on the code ensemble but only on the channel. 

Consider equation \eqref{second-derivative-ensemble} for 
$\frac{d^2}{d\epsilon^2}\mathbb{E}_{\code, \underline{t}}[h_n]$. By the same estimates than those for $S_1$ in appendix \ref{appendix-B}, the first term on the right hand side is certainly greater than
\begin{equation*}
- \sum_{p=1}^{+\infty}\frac{|m_2^{(2p)}|}{2p(2p-1)} = -A
\end{equation*}
To get a lower bound for the second term we consider the series expansion given by that for $S_2$ in \eqref{S2-exp}. In that series we keep the first term corresponding to $k=l=1$, namely
\begin{equation*}
\frac12(m_1^{(2)})^2\sum_{j\neq 1}\expt_{\code,\underline{t}^{\sim 1j}}\biggl[\biggl(\langle\sigma_1\sigma_j\rangle_{\sim 1j}
 - \langle\sigma_1\rangle_{\sim 1j}\langle\sigma_j\rangle_{\sim 1j}\biggr)^2 \biggr] =B
\end{equation*}
and lower bound the rest of the series $(k,l)\neq (1,1)$ by using estimates of appendix 
\ref{appendix-B}. More precisely this part is lower bounded by
\begin{align*}
 -\Biggl(\frac{8}{625}&\biggl(\sum_{p=1}^{+\infty} \bigl(\frac{5}{2}\bigr)^{2p} \vert m_1^{(2p)}\vert\biggr)^2 -\frac12 (m_1^{(2)})^2 \Biggr)
\\ \nonumber \times &
\sum_{j\neq 1}\expt_{\code,\underline{t}^{\sim 1j}}\biggl[\biggl(\langle\sigma_1\sigma_j\rangle_{\sim 1j}
 - \langle\sigma_1\rangle_{\sim 1j}\langle\sigma_j\rangle_{\sim 1j}\biggr)^2 \biggr] = - C
\end{align*} 
Putting these three estimates together we get
\begin{equation}\label{C}
\frac{d^2}{d\epsilon^2}\mathbb{E}_{\code, \underline{t}}[h_n]\geq - A + B - C
\end{equation}
As long as the noise level is high enough so that (see $H$)
\begin{equation*}
\sum_{p=2}^{+\infty} \bigl(\frac{5}{2}\bigr)^{2p} \vert m_1^{(2p)}\vert
<  (\sqrt{2}-1)\bigl(\frac{5}{2}\bigr)^2\vert m_1^{(2)} \vert
\end{equation*}
the inequality \eqref{C} implies 
\begin{equation}\label{almost-lemma}
\sum_{j\neq 1}\expt_{\code,\underline{t}^{\sim 1j}}\biggl[\biggl(\langle\sigma_1\sigma_j\rangle_{\sim 1j}
 - \langle\sigma_1\rangle_{\sim 1j}\langle\sigma_j\rangle_{\sim 1j}\biggr)^2 \biggr]
 \leq 
 \tilde F(\epsilon) +\tilde G(\epsilon) \frac{d^2}{d\epsilon^2}\mathbb{E}_{\code, \underline{t}}[h_n]
\end{equation}
for two noise dependent positive finite constants  $\tilde F(\epsilon)$,
$\tilde G(\epsilon)$.

The final step of the proof consists in passing from the extrinsic average
$\langle -\rangle_{\sim 1j}$ in the correlation to the ordinary one $\langle - \rangle_{1j}$. This is achieved as follows. From the formulas \eqref{sigmai} and \eqref{sigmatosigmaoo} we deduce that 
$$
\langle\sigma_j\sigma_i\rangle-\langle\sigma_j\rangle\langle\sigma_i\rangle=
\big(\langle\sigma_j\sigma_i\rangle_{\sim ij}-\langle\sigma_j\rangle_{\sim ij}\langle\sigma_i\rangle_{\sim ij}\big) R_{ij}
$$
with
$$
R_{ij}=\frac{\big(1-\langle\sigma_i\rangle t_i-\langle\sigma_j\rangle t_j+\langle\sigma_i\sigma_j\rangle t_i t_j\big)^2}{(1- t_i^2)(1-t_j^2)}\le \frac{4}{(1-t_i^2)(1-t_j^2)} 
$$
a function that depends on all log-likelihood variables. 

Thus we have
\begin{align*}
(\langle\sigma_j\sigma_i\rangle-\langle\sigma_j\rangle\langle\sigma_i\rangle)^2 &=
\big(\langle\sigma_j\sigma_i\rangle_{\sim ij}-\langle\sigma_j\rangle_{\sim ij}\langle\sigma_i\rangle_{\sim ij}\big)^2 R^2_{ij} \\
&\le \big(\langle\sigma_j\sigma_i\rangle_{\sim ij}-\langle\sigma_j\rangle_{\sim ij}\langle\sigma_i\rangle_{\sim ij}\big)^2 \frac{16}{(1-t_i^2)^2(1-t_j^2)^2}
\end{align*}
Taking now the expectation $\expt_{\code, \underline{t}}$ we get
\begin{align*}
\expt_{\code,\underline{t}}\biggl[\biggl(\langle\sigma_j\sigma_i\rangle-\langle\sigma_j\rangle\langle\sigma_i\rangle\biggr)^2\biggr] 
&\le \expt_{\code, \underline{t}^{\sim ij}}\biggl[\biggl(\langle\sigma_j\sigma_i\rangle_{\sim ij}-\langle\sigma_j\rangle_{\sim ij}\langle\sigma_i\rangle_{\sim ij}\biggr)^2\biggr] \\ &\times \expt_{t_i,t_j}\biggl[\frac{16}{(1-t_i^2)^2(1-t_j^2)^2}\biggr]
\end{align*}
Since $t_i, t_j$ are independent we get
\begin{align}\label{schwarzy}
\expt_{t_i,t_j}\biggl[\frac{16}{(1-t_i^2)^2(1-t_j^2)^2}\biggr]&=16\biggl(\expt\biggl[\frac{1}{(1-t^2)^2}\biggr]\biggr)^2 
=16\biggl(\expt\biggl[\sum_{p\geq 0}(p+1)t^{2p}\biggr]\biggr)^2 \nonumber \\ &=16\biggl(\biggl[\sum_{p\geq 0}(p+1)m_0^{(2p)}\biggr]\biggr)^2
\end{align}
which converges for highly noisy channels satisfying $H$.
The result of the lemma follows by combining \eqref{almost-lemma} and \eqref{schwarzy}. The constants $F(\epsilon)$ and $G(\epsilon)$ are equal to $\tilde F(\epsilon)$ and $\tilde G(\epsilon)$ divided by the 
expression on the right hand side of the last inequality.

\section{Appendix D}\label{appendix-E}
We prove the boundedness and positivity of $\frac{d}{d\epsilon}\mathbb{E}_s[h_{n,\gamma}(t_*,s)]$ which is needed in the proof of lemma \ref{lemma-3}. 

\begin{lemma}\label{lemma1}
For the BEC and BIAWGNC with any noise level, and any BMS satisfying $H$, there exists a constant $k(\epsilon)$ independent of $n$, $\gamma$, $t_*$ and  $s$ such that
\begin{align}
0\le \frac{d}{d\epsilon}\mathbb{E}_s [h_{n,\gamma}(t_*,s)]\leq k(\epsilon) \label{eq:fdbnd}
\end{align}
For the  BEC we can take $k(\epsilon)=\frac{\ln 2}{\epsilon}$ and  for the BIAWGNC $k(\epsilon)=\frac{2}{\epsilon^{-3}}$. For general BMS channels satisfying $H$ the constant remains bounded as a function of $\epsilon$ (i.e. in the high noise regime).
\end{lemma}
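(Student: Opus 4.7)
My plan is to apply the GEXIT formula (Proposition~\ref{lem:firstderi}) to the interpolating Gibbs measure $\mu_{t_*,s}$, viewing the extra observations $U_a^i$ as additional channel-symmetric outputs for bit $i$. The assumption $d_V(v)=e^v d_V(-v)$ implies that the $V_i$'s and hence each $U$ defined by~\eqref{U-variable} are channel-symmetric, so the effective per-bit log-likelihood $l_i+2\sum_a U_a^i$ remains channel-symmetric and all Nishimori identities carry through. Since $\partial/\partial\epsilon$ only acts on the true channel likelihoods $l_i$, Proposition~\ref{lem:firstderi} yields
\begin{equation*}
\frac{d}{d\epsilon}\mathbb{E}_s[h_{n,\gamma}(t_*,s)]=\int_{-1}^{+1}dt_1\,\frac{\partial c_D(t_1)}{\partial\epsilon}\,\mathbb{E}_s[g_1(t_1)].
\end{equation*}

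For the BEC, the simplified identity~\eqref{bec-intrin} gives
\begin{equation*}
\frac{d}{d\epsilon}\mathbb{E}_s[h_{n,\gamma}(t_*,s)]=\frac{\ln 2}{\epsilon}\bigl(1-\mathbb{E}_s[T_1]\bigr),
\end{equation*}
and the Nishimori/GKS combination recalled in Section~\ref{section-3} forces $\langle\sigma_1\rangle\in\{0,1\}$ and hence $\mathbb{E}_s[T_1]\in[0,1]$, so $0\le \frac{d}{d\epsilon}\mathbb{E}_s[h_{n,\gamma}(t_*,s)]\le\ln 2/\epsilon$. For the BIAWGNC, \eqref{biawgnc-intrin} together with the chain rule $d/d\epsilon=-2\epsilon^{-3}\,d/d\epsilon^{-2}$ produces
\begin{equation*}
\frac{d}{d\epsilon}\mathbb{E}_s[h_{n,\gamma}(t_*,s)]=\epsilon^{-3}\bigl(1-\mathbb{E}_s[T_1]\bigr),
\end{equation*}
where now $\mathbb{E}_s[T_1]=\mathbb{E}_s[\langle\sigma_1\rangle^2]\in[0,1]$ already follows from Nishimori alone, yielding the stated bounds.

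For a general BMS satisfying $H$, my plan is to mirror the treatment of $S_1$ in Appendix~\ref{appendix-B}. Rewriting $g_1$ in extrinsic form via~\eqref{sigmatosigmao}, the identity $(1-t_1T_1)/(1-t_1)=(1+t_1)/(1+t_1\langle\sigma_1\rangle_{\sim 1})$ gives $g_1(t_1)=\mathbb{E}_s[\ln(1+t_1\langle\sigma_1\rangle_{\sim 1})]-\ln(1+t_1)$. Expanding both logarithms as power series in $t_1$, integrating against $\partial c_D(t_1)/\partial\epsilon$, and collapsing odd and even terms using the channel-symmetry identity $m_1^{(2p-1)}=m_1^{(2p)}$ (a direct consequence of $c_L(l)=e^l c_L(-l)$) together with the Nishimori identity $\mathbb{E}_s[\langle\sigma_1\rangle_{\sim 1}^{2p-1}]=\mathbb{E}_s[\langle\sigma_1\rangle_{\sim 1}^{2p}]$ yields
\begin{equation*}
\frac{d}{d\epsilon}\mathbb{E}_s[h_{n,\gamma}(t_*,s)]=\sum_{p\ge 1}\frac{m_1^{(2p)}}{2p(2p-1)}\bigl(\mathbb{E}_s[\langle\sigma_1\rangle_{\sim 1}^{2p}]-1\bigr).
\end{equation*}
Positivity is immediate: the second factor is $\le 0$ because $|\langle\sigma_1\rangle_{\sim 1}|\le 1$, and $m_1^{(2p)}\le 0$ because $\mathbb{E}[t^{2p}]$ decreases with the noise parameter $\epsilon$ in any physically degraded parameterisation (directly verifiable for the BEC, BIAWGNC and BSC, and standard under the setup of $H$). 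Absolute convergence uniformly in $n,\gamma,t_*,s$ follows from $H$, since $\sum_p|m_1^{(2p)}|/(2p(2p-1))\le\sum_p(5/2)^{2p}|m_1^{(2p)}|<\infty$, so one may take $k(\epsilon)=\sum_{p\ge 1}|m_1^{(2p)}|/(2p(2p-1))$.

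The main technical point I expect to have to verify is that both collapsing identities survive the passage to the interpolating ensemble. The first concerns only $c_L$ and is therefore unchanged; the second requires every per-bit effective observation to be channel-symmetric, which reduces through~\eqref{U-variable} to the assumption $d_V(v)=e^v d_V(-v)$---exactly why that assumption is imposed in the interpolation construction. Beyond this verification, all remaining steps are parallel to the derivation of~\eqref{S1-exp} in Appendix~\ref{appendix-B} and present no new difficulty.
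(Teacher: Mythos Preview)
Your proposal is correct and follows essentially the same line as the paper's proof in Appendix~\ref{appendix-E}: apply the GEXIT formula to the interpolating ensemble, use the explicit forms \eqref{bec-intrin} and \eqref{biawgnc-intrin} for the BEC and BIAWGNC, and for a general BMS under $H$ pass to the extrinsic form, expand the logarithm, and collapse odd/even terms via Nishimori identities to obtain $\sum_{p\ge 1}\frac{m_1^{(2p)}}{2p(2p-1)}(\mathbb{E}_s[\langle\sigma_1\rangle_{\sim 1}^{2p}]-1)$, from which positivity and the uniform bound follow. Your explicit verification that channel symmetry of the auxiliary observations $U_a^i$ (via $d_V(v)=e^v d_V(-v)$) preserves the Nishimori identities is a point the paper leaves implicit, and your BIAWGNC constant $\epsilon^{-3}$ is in fact the correct one (the paper's stated $\frac{2}{\epsilon^{-3}}$ and the $\frac{2}{\epsilon^3}$ in its proof are minor misprints).
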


Here we have stated the lemma for the multi-Poisson interpolating ensemble which is our specific need. However as the proof below shows it is independent of the specific code ensemble and the bound depends only on the channel.

\begin{proof}
We will use the GEXIT formula of lemma \ref{lem:firstderi}. Since the proposition applies for any linear code it also applies for the interpolating ensemble of interest here. In the case of the BEC and BIAWGNC we have (see \eqref{bec-intrin}, \eqref{biawgnc-intrin}
\begin{align*}
\frac{d}{d\epsilon}\mathbb{E}_s[h_{n,\gamma}(t_*,s)] = 
\frac{\ln 2}{\epsilon}(1-\mathbb{E}_s[\langle\sigma_1\rangle_s]
\end{align*}
and 
\begin{align*}
\frac{d}{d\epsilon}\mathbb{E}_s[h_{n,\gamma}(t_*,s)] = 
\frac{2}{\epsilon^3}(1-\mathbb{E}_s[\langle\sigma_1\rangle_s]
\end{align*}
The bounds of the lemma follow immediately since $-1\leq\sigma_1\leq 1$.

For highly noisy BMS channels we proceed by expansions. For this reason we have to use the ``extrinsic form'' of the GEXIT formula (analogous to \eqref{eqIextrinsic})
\begin{equation}\nonumber
\frac{d}{d\epsilon}\mathbb{E}_s[h_{n,\gamma}(t_*,s)] =\int_{-1}^{+1} dt_1 
\frac{\partial c_D(t_1)}{\partial \epsilon}
 \mathbb{E}_{s,\sim t_1} \biggl[\ln\biggl(\frac{1+t_1\langle \sigma_1\rangle_{s, \sim 1}}{1+t_1}\biggr)\biggr]
\end{equation}\nonumber
Expanding the logarithm and using Nishimori identities (as in the expansion of $S_1$ in appendix \ref{appendix-B}  we  obtain 
\begin{equation}\nonumber
\frac{d}{d\epsilon}\mathbb{E}_s[h_{n,\gamma}(t_*,s)] = 
\sum_{p=1}^\infty \frac{m_1^{(2p)}}{2p(2p-1)}
 \mathbb{E}_{s,\sim 1}[\langle\sigma_1\rangle_{s,\sim 1}^{2p}-1]
\end{equation}
The positivity follows from $m_1^{(2p)}\leq 0$ \cite{Ruediger-Ridchardson-book} and $-1\leq\sigma_1\leq 1$.
The upper bound (and absolute convergence) follow from condition $H$. In particular we get
\begin{equation}\nonumber
 k(\epsilon)= 2\sum_{k=1}^{+\infty}\frac{\vert m_1^{(2p)}\vert}{2p(2p-1)}
\end{equation}
which is independent of $n$, $\gamma$, $t_*$ and $s$.
\end{proof}

\section*{Acknowledgment} 
The work of Shrinivas Kudekar has been supported by a grant of the Swiss National Foundation 200021-105604. The authors acknowledge various discussions with S. Korada, O. Leveque and R. Urbanke.

\end{document}